\newtheorem{defn}{\bf Definition}
\newtheorem{thm}{\bf Theorem}
\newtheorem{lem}[thm]{\bf Lemma}
\newcommand{\beq}{\begin{equation}}
\newcommand{\eeq}{\end{equation}}
\newcommand{\bea}{\begin{eqnarray}}
\newcommand{\eea}{\end{eqnarray}}
\newcommand{\bean}{\begin{eqnarray*}}
\newcommand{\eean}{\end{eqnarray*}}
\newcommand{\bit}{\begin{itemize}}
\newcommand{\eit}{\end{itemize}}
\newcommand{\ben}{\begin{enumerate}}
\newcommand{\een}{\end{enumerate}}
\newcommand {\nbf}[1]{\mbox{\boldmath $#1$} }
\newcommand {\snbf}[1]{\mbox{\scriptsize \boldmath $ #1$} }
\begin{document}
%
\title{Interleaved Polar (I-Polar) Codes}
%
%
%

\author{Mao-Ching Chiu,~\IEEEmembership{Member,~IEEE}
\thanks{
  This work was supported by the Ministry of Science and 
  Technology, Taiwan, R.O.C., under Grant MOST 
  	107-2221-E-194-003-MY2.}
\thanks{M.-C. Chiu is with the Department of
Communications Engineering, National Chung Cheng University,
Min-Hsiung, Chia-Yi, 621, Taiwan, R.O.C. (e-mail:
ieemcc@ccu.edu.tw). }
}

\maketitle



\begin{abstract}
By inserting interleavers between intermediate stages of the polar encoder, a new class of polar codes, termed interleaved polar (i-polar) codes, is proposed. By the uniform interleaver assumption, we derive the weight enumerating function (WEF) and input-output weight enumerating function (IOWEF) averaged over the ensemble of i-polar codes. The average WEF can be used to calculate the upper bound on the average block error rate (BLER) of a code selected at random from the ensemble of i-polar codes. Also, we propose a concatenated coding scheme that employs $P$ high rate codes as the outer code and $Q$ i-polar codes as the inner code with an interleaver in between. The average WEF of the concatenated code is derived based on the uniform interleaver assumption. Simulation results show that BLER upper bounds can well predict BLER performance levels of the concatenated codes. The results show that the performance of the proposed concatenated code with $P=Q=2$ is better than that of the CRC-aided i-polar code with $P=Q=1$ of the same length and code rate at high signal-to-noise ratios (SNRs). Moreover, the proposed concatenated code allows multiple decoders to operate in parallel, which can reduce the decoding latency and hence is suitable for ultra-reliable low-latency communications (URLLC).


\end{abstract}

\begin{IEEEkeywords}
Polar codes, weight enumerating function, input-output weight enumerating function. 
\end{IEEEkeywords}

\newpage

%
\IEEEpeerreviewmaketitle

\section{Introduction}
Polar codes \cite{Arikan2009} are constructed from the generator matrix $\nbf{G}_2^{\otimes M}$ with $\nbf{G}_2=\left[{1 \atop 1} {0 \atop 1}\right]$, where $\otimes M$ denotes the $M$th Kronecker power. It has been shown in \cite{Arikan2009}, that the synthesized channels seen by individual bits approach two extremes, either a noiseless channel or a pure-noise channel, as the block length $N = 2^M$ grows large. The fraction of noiseless channels is close to the channel capacity. Therefore, the noiseless channels, termed unfrozen bit channels, are selected for transmitting message bits while the other channels, termed frozen bit channels, are set to fixed values known by both encoder and decoder. Therefore, polar codes are the first family of codes that achieve the capacity of symmetric binary-input discrete memoryless channels under a low-complexity successive cancellation (SC) decoding algorithm as the block length $N$ approaches infinity.

However, the performance of polar codes at short to moderate block lengths is disappointing under the SC decoding algorithm.
Later, a successive cancellation list (SCL) decoding algorithm for polar codes was proposed \cite{Tal2015}, which approaches the performance of the maximum-likelihood (ML) decoder as the list size $L$ is large. However, the performance levels of polar codes are still inferior to those of low-density parity-check (LDPC) codes even under the ML decoder. To strengthen polar codes, a serial concatenation of a cyclic redundancy check (CRC) code and a polar code, termed the CRC-aided polar code, was found to be effective to improve the performance under the SCL decoding algorithm \cite{Tal2015}. The performance levels of CRC-aided polar codes under the SCL decoding algorithm are better than those of LDPC and turbo codes \cite{Tal2015, Li2012}.

As the SCL decoder is capable to achieve the ML performance, it is important to study the block error rate (BLER) of polar codes under the ML decoder. However, in the literature, there are no analytical results regarding the ML performance of polar codes. The BLERs of polar codes rely on simulations that are time-consuming. A possible way to analyze the BLER performance of a coding scheme is to use the BLER upper bound which is a function of the weight enumerating function (WEF) as that used to analyze turbo codes \cite{Benedetto1996}. 
However, if the code size is large, obtaining the exact WEF of a polar code with the heuristic method is prohibitively complex. Approximations of WEFs of polar codes are proposed in \cite{Valipour2013, Zhang2017} based on the probabilistic weight distribution (PWD) \cite{Hirotomo2005}. 

In this paper, we propose to randomize the polar code using interleavers between the intermediate stages of the polar code encoder. Codes constructed on the basis of this idea are called interleaved polar (i-polar) codes. The ensemble of i-polar codes is formed by considering all possible interleavers. The regular polar code is just one realization of the ensemble of i-polar codes. Based on the concept of uniform interleaver, i.e., all interleavers are selected uniformly at random from all possible permutations, the average WEF of a code selected at random from the ensemble of i-polar codes can be evaluated. The concept of uniform interleaver has also been used in the analysis of turbo codes  \cite{Benedetto1996}. Note that the WEF analysis in this paper is not an approximation to the WEF of a polar code, but is an exact WEF averaged over the ensemble of i-polar codes. Based on the average WEF, a BLER upper bound, termed {\em simple bound} \cite{Divsalar1999}, can be used to evaluate the BLER performance averaged over the ensemble of codes. Simulation results show that the BLER upper bounds can well predict the ML  performance levels of i-polar codes at high SNRs. 
Also, we will show by simulations that a specific realization of i-polar codes outperforms a regular polar code under the SCL decoder of the same list size. 

We also propose a concatenated coding scheme that employs $P$ identical high rate codes as the outer code and $Q$ identical i-polar codes as the inner code with an interleaver in between. CRC codes are the most popular outer codes employed in the concatenation of polar codes. We propose as an alternative to use systematic regular repeat-accumulate (RRA) codes or irregular repeat-accumulate (IRA) codes \cite{Jin2000} as the outer component code. The average WEF of the concatenated code is derived based on the uniform interleaver assumption. Simulation results show that the BLER upper bounds can well predict the BLER performance levels of the concatenated codes. One advantage of the proposed concatenated code is that, for $Q > 1$, the code can be decoded using $Q$ SCL decoders working in parallel which can significantly reduce the decoding latency when $Q$ is large. Analytical and simulation results both show that the performance of the proposed concatenated code with $P=Q=2$ is better than that of the CRC-aided i-polar code with $P=Q=1$ of the same length and code rate at high SNRs. Therefore, the proposed coding scheme is suitable for ultra-reliable low-latency communications (URLLC) \cite{ITU-R2017}.

The rest of the paper is organized as follows. We begin with a brief introduction of polar codes in Section \ref{sec:back}. The construction of i-polar codes is presented in Section \ref{sec:IPC}. Section \ref{sec:WEF} presents the WEF and IOWEF analysis of i-polar codes. In Section \ref{sec:concat}, a concatenated coding scheme with the i-polar code as the inner component code is proposed and the WEF of the concatenated code is presented. Analytical and simulation results are given in Section \ref{sec:simu}. Finally, conclusions are given in Section \ref{sec:conc}.

Notations: Throughout this paper, matrices and vectors are set in boldface, with upper case letters for matrices and lower case letters for vectors. An $N$-tuple vector $\nbf{x}$ is denoted as $\nbf{x} = [x_0, x_1, \ldots, x_{N-1}]$ with the indices starting from 0 (instead of 1 for normal vector representations). The notation $\nbf{x}_{a}^{b}$ means the sub-vector $[x_a, x_{a+1}, \ldots, x_{b}]$ if $b \geq a$ and null vector otherwise. 
Set quantities such as ${\cal A}$ are denoted using the calligraphic font, and the cardinality of the set ${\cal A}$ is denoted as $|{\cal A}|$. 

\section{Background}
\label{sec:back}
A codeword of the polar code of length $N = 2^M$ without bit-reversal matrix can be represented by
\beq
      \nbf{x} = \nbf{u} \nbf{G}_2^{\otimes M}, \label{eq:genm}
\eeq
where $\nbf{u} = [u_0, u_1, \ldots, u_{N-1}]$ is the message bits, $\nbf{x} = [x_0, x_1, \ldots, x_{N-1}]$ is the codeword bits, and $\nbf{G}_2=\left[{1 \atop 1} {0 \atop 1}\right]$. A polar code of block length $2^M$ can be represented by a graph with $M$ layers of trellis connections as given in \cite{Arikan2009} which is called the standard graph of the polar code. It has been shown in \cite{Hussami2009a} that for a polar code of block length $2^M$, there exist $M!$ different
graphs obtained by different permutations of the $M$ layers of trellis connections. We consider to represent a polar code with reverse ordering of its standard graph. Figure \ref{fig:polar_enc8} shows an example graph for $N = 8$ with reverse ordering of the standard graph, where the notation $\oplus$ represents a modulo-2 adder.
\begin{figure}[!t]
\centering
\includegraphics[width=0.8\columnwidth]{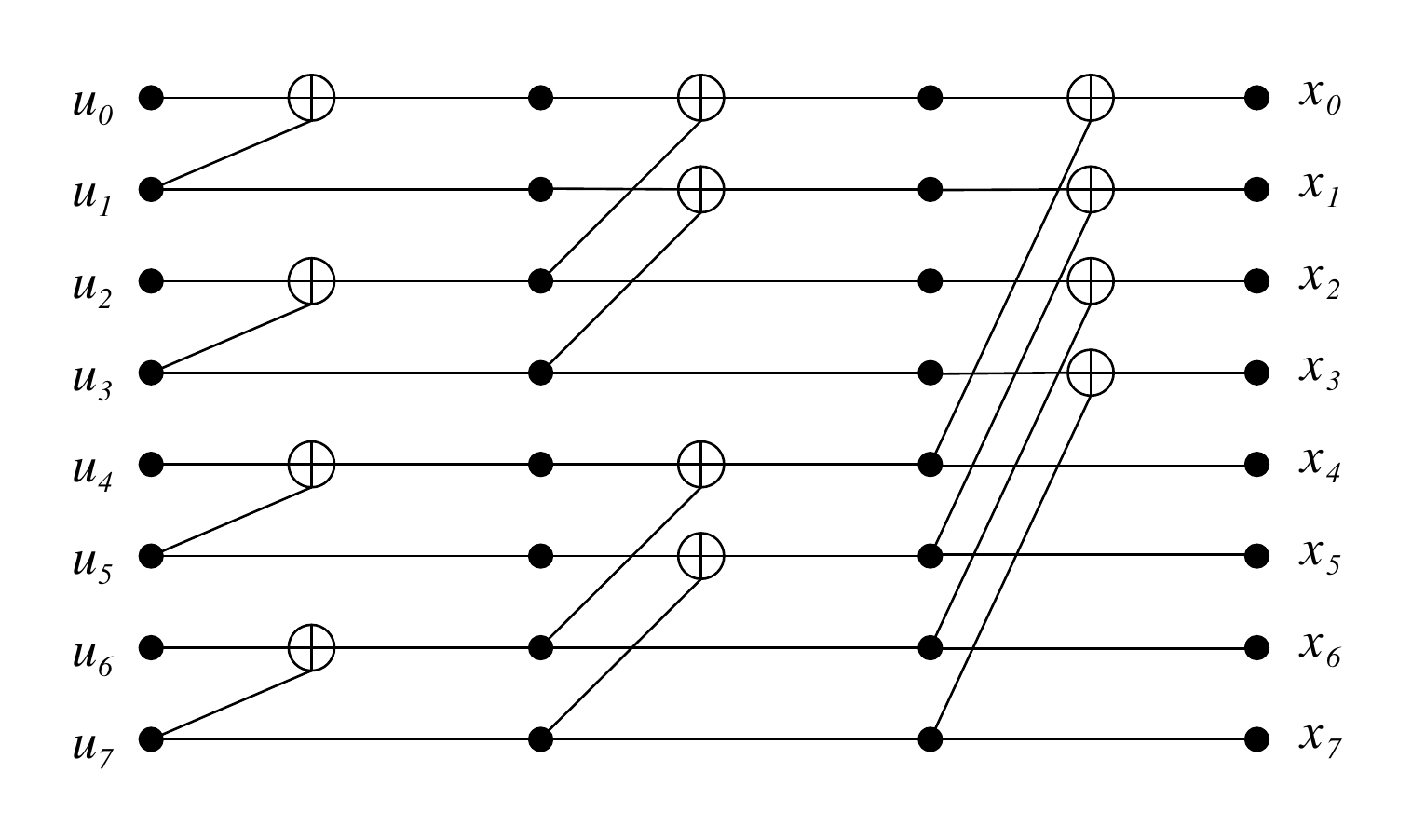}
\caption{Graph representation of a polar code of block length $N=8$.}
\label{fig:polar_enc8}
\end{figure}

The codeword $\nbf{x}$ obtained from (\ref{eq:genm}) is then transmitted via $N$ independent uses of the binary input discrete memoryless channel (B-DMC) $W:{\cal X} \rightarrow {\cal Y}$, where ${\cal X} = \{0, 1\}$ denotes the input alphabet, ${\cal Y}$ denotes the output alphabet, and $W(y|x)$ denotes the channel transition probabilities. The conditional distribution of the output ${\nbf Y} = {\nbf{y}}$ given the input ${\nbf{X}} = \nbf{x}$, denoted as $W^N(\nbf{y}|\nbf{x})$, is given by
\[
W^N(\nbf{y}|\nbf{x}) = \prod_{i=0}^{N-1} W(y_i|x_i).
\]
The distribution of $\nbf{Y}$ conditioned on $\nbf{U}=\nbf{u}$, denoted as $W_M(\nbf{y}|\nbf{u})$, is given by
\[
     W_M(\nbf{y}|\nbf{u}) = W^N(\nbf{y}|\nbf{u}\nbf{G}_2^{\otimes n}).
\]
The polar code of length $N=2^M$ transfers the original $2^M$ identical channels $W$ into $2^M$ synthesized channels, denoted as $W_M^{(i)}: {\cal X} \rightarrow {\cal Y}^N \times {\cal X}^i$ for $i \in \{0, \ldots, 2^M-1\}$ with the transition probability given by
\[
    W_M^{(i)}(\nbf{y}, \nbf{u}_0^{i-1}|u_i) \equiv \sum_{\nbf{u}_{i+1}^{N-1} \in {\cal X}^{N-i-1}} \frac{1}{2^{N-1}}W_M(\nbf{y}|\nbf{u}).
\]
It has been shown in \cite{Arikan2009} that as $M$ grows large, the synthesized channels start polarizing. They approach either a noiseless channel or a pure-noise channel. The fraction of noiseless channels is close to the channel capacity. Therefore, the noiseless channels are selected for transmitting message bits while the other channels are set to fixed values known by both encoder and decoder. In the code design, a polar code of dimension $K$ is generated by selecting the $K$ least noisy channels among $W_M^{(i)}$ and the indices of the $K$ least noisy channels are denoted as a set ${\cal A}$. Define $\nbf{u}_{\cal A}$ as a sub-vector of $\nbf{u}$ formed by the elements of $\nbf{u}$ with indices in ${\cal A}$. Only the sub-vector $\nbf{u}_A$, termed {\em unfrozen bits}, is employed to transmit message bits. The other bits $\nbf{u}_{{\cal A}^c}$, termed {\em frozen bits}, are set to fixed values known by both encoder and decoder. In this paper, we set the frozen bits to all zeros.

Polar codes can be decoded by the SC decoder which has decoding complexity of $O(N\log N)$ and can achieve the capacity as $N$ approaches infinity \cite{Arikan2009}. However, SC decoder does not perform well at short to moderate block lengths. The SC decoder has the drawback that if a bit is not correctly detected, it is not possible to correct it in future decoding steps. To improve the performance, a more sophisticated SCL decoder was proposed in \cite{Tal2015}, which performs very close to the ML performance for large list size $L$. 
The SCL decoder of list size $L$ is based on the tree search over the message bits under the complexity constraint that the number of candidates in the list is at most $L$. 
At the $i$th step, if $i \in {\cal A}$, the decoder extends every candidate path in the list along two paths of the binary tree by appending a bit 0 or a bit 1 to each of the candidate paths.
Therefore, for every $i \in {\cal A}$, the decoder doubles the number of paths. When the number of paths exceeds $L$, only $L$ most reliable paths are retained. This procedure is repeated until $i=N-1$. At the last step, the most reliable path is selected as the output of the decoder. The SCL decoder degenerates to the SC decoder when $L = 1$. The details of the SCL decoder can be found in \cite{Tal2015} based on the probability domain and in \cite{Balatsoukas-Stimming2015} based on the log-likelihood ratio (LLR) domain.

\begin{figure}[!t]
\centering
r\includegraphics[width=0.8\columnwidth]{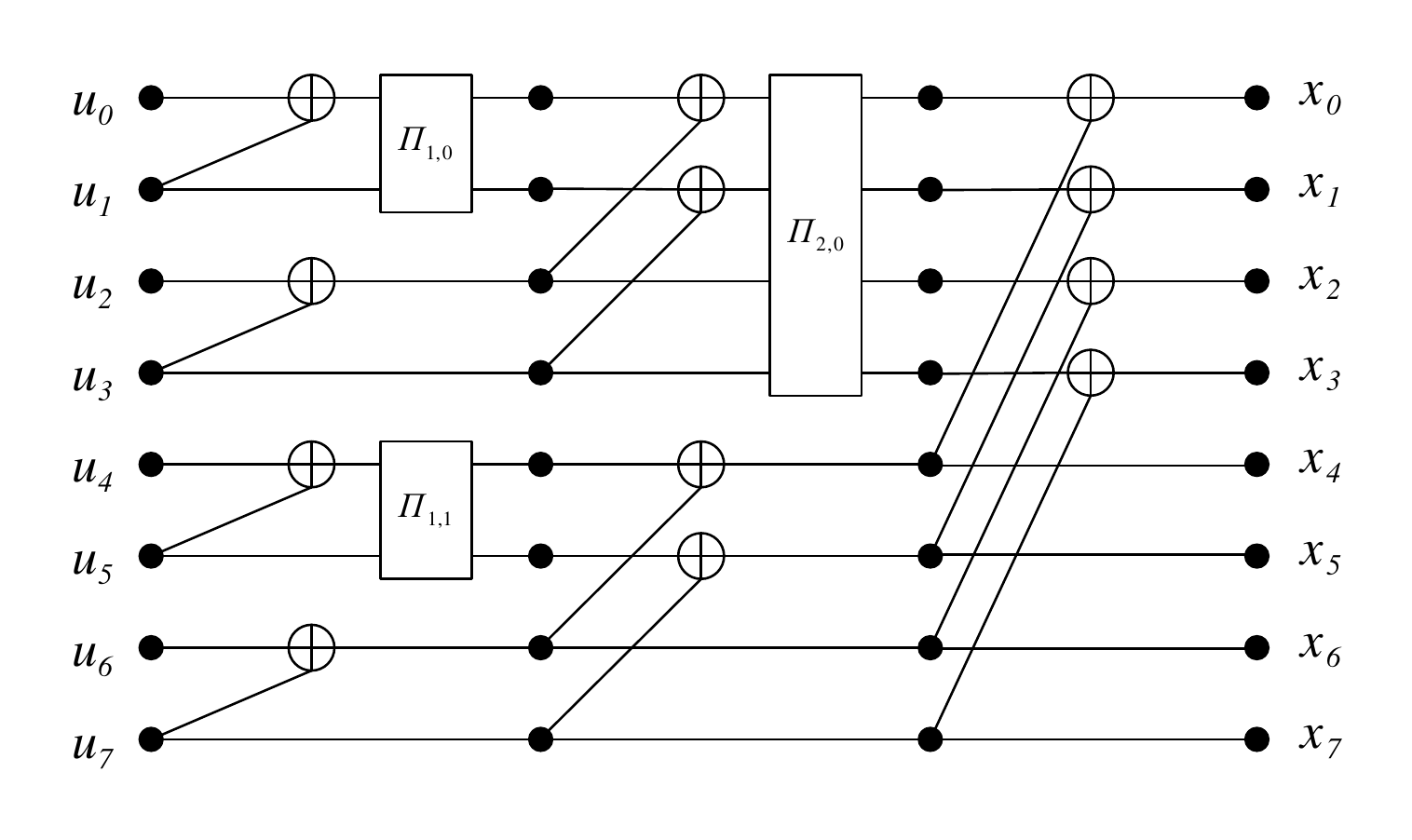}
\caption{Graph representation of an interleaved polar code of block length $N=8$.}
\label{fig:rand_polar_enc8}
\end{figure}

\section{Interleaved Polar (I-Polar) Codes}
\label{sec:IPC}
A polar code is constructed recursively by the well-known structure $\nbf{x}+\nbf{y}|\nbf{y}$. Without ambiguity, we use '+' to denote the binary addition as well as the ordinary real number addition. Let ${\cal C}_1$ and ${\cal C}_2$ be two linear codes of the same length $n$. We define ${\cal C}_1+{\cal C}_2|{\cal C}_2$ as
\[
     {\cal C}_1+{\cal C}_2|{\cal C}_2 = \left\{[\nbf{x}+\nbf{y},\nbf{y}]| \nbf{x} \in {\cal C}_1, \nbf{y} \in{\cal C}_2\right\}.
\]
As shown by the graph representation of the polar code, a polar code can be described by the following recursive equation
\[
     {\cal C}_{m,j} = {\cal C}_{m-1,2j} + {\cal C}_{m-1,2j+1}|{\cal C}_{m-1,2j+1},
\]
for $m=1, \ldots, M$ and $j = 0, \ldots, 2^{M-m}-1$. The initial conditions are ${\cal C}_{0,j} = \{0, 1\}$ if $j \in {\cal A}$ and ${\cal C}_{0,j} = \{0\}$ if $j \in {\cal A}^c$. The polar code of length $N=2^M$ is represented by the code ${\cal C}_{M,0}$.

We propose to construct the i-polar code by inserting an interleaver at the output of every upper encoder for ${\cal C}_{m-1,2j}$. An interleaver can be represented as a permutation matrix $\nbf{\Pi}$. We define $ {\cal C} \nbf{\Pi}$ as 
\[
   {\cal C}  \nbf{\Pi} = \{\nbf{x} \nbf{\Pi}| \nbf{x} \in {\cal C}\}
\] 
which represents a code obtained by permuting the code bits of all codewords of ${\cal C}$ using the interleaver $\nbf{\Pi}$. Therefore, the i-polar code can be described by the following recursive equation
\beq
    {\cal C}_{m,j} ={\cal C}_{m-1,2j} \nbf{\Pi}_{m-1,j} + {\cal C}_{m-1,2j+1}|{\cal C}_{m-1,2j+1},
\label{eq:rpr}
\eeq
for $m=1, \ldots, M$ and $j = 0, \ldots, 2^{M-m}-1$. The initial conditions are ${\cal C}_{0,j} = \{0, 1\}$ if $j \in {\cal A}$ and ${\cal C}_{0,j} = \{0\}$ if $j \in {\cal A}^c$. Note that the interleavers $\nbf{\Pi}_{0,j} = 1$, for $j=0, \ldots, 2^{M-1}-1$, are trivial. At the $m$th layer, there are $2^{M-m-1}$ interleavers of size $2^{m}$. Figure \ref{fig:rand_polar_enc8} shows the graph of an i-polar code of length $N = 8$, for which three interleavers are required, i.e., $\nbf{\Pi}_{1,0}$, $\nbf{\Pi}_{1,1}$, and $\nbf{\Pi}_{2,0}$ with sizes 2, 2, and 4, respectively. Note that the interleavers $\nbf{\Pi}_{0,j}$, for $j=0,1,2,3$, are omitted because they are trivial.

\begin{figure}[ht]
\centering
\includegraphics[width=0.8\columnwidth]{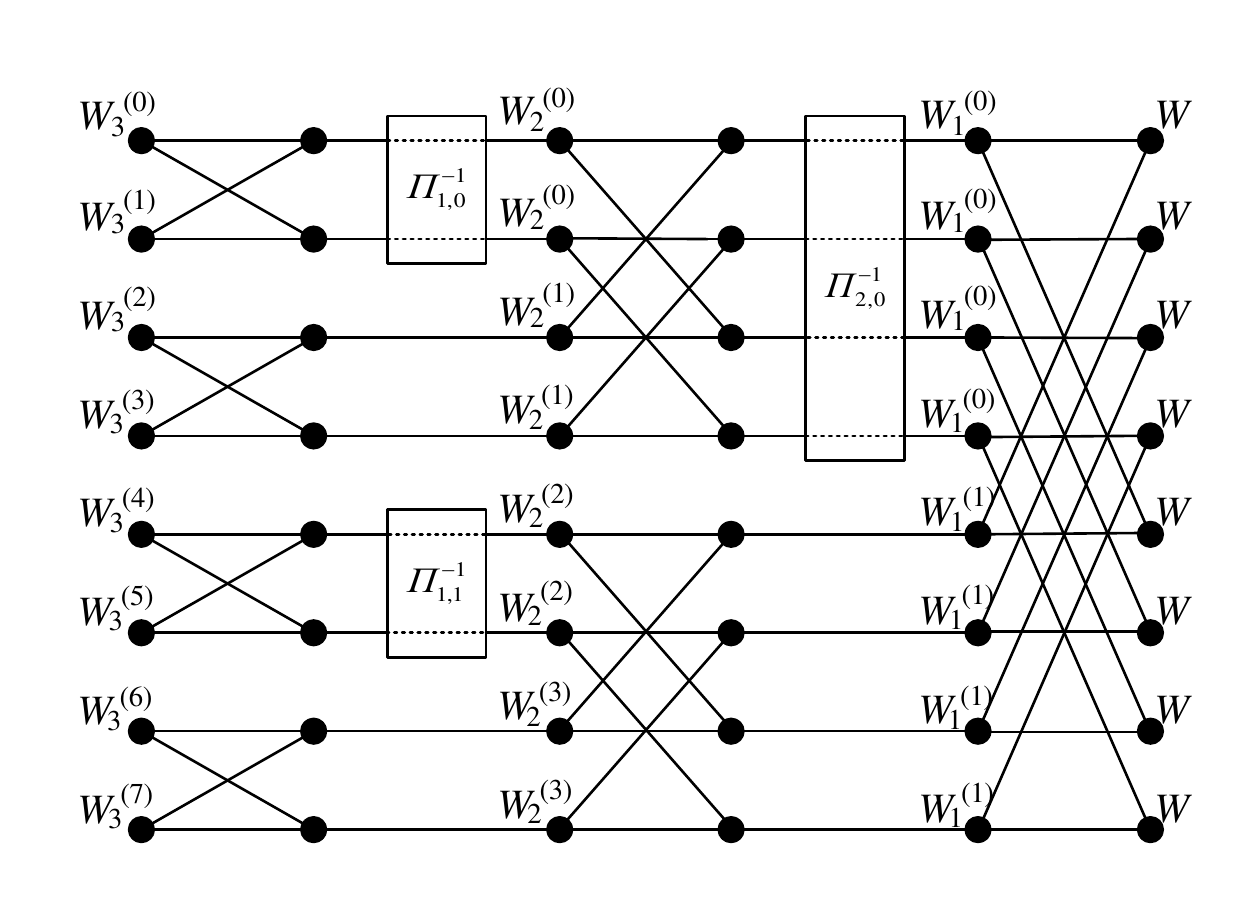}
\caption{The channel transformation process with $N=8$ channels.}
\label{fig:rand_polar_ch_tr}
\end{figure}

The following theorem shows that the interleavers do not change the polarization effect.
\begin{thm}
\label{th:thm0}
I-polar codes have the same polarization effect as polar codes.
\end{thm}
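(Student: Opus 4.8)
The plan is to track the synthesized bit-channels through the recursion \eqref{eq:rpr} and show that an i-polar code and the corresponding polar code give rise to the \emph{same multiset} of synthesized channels, up to a bijective relabeling of channel output alphabets. Such a relabeling preserves the symmetric capacity, the Bhattacharyya parameter, and every other quantity on which polarization depends, so this is exactly the claim ``same polarization effect.'' Note that the synthesized channels $W_M^{(i)}$ depend only on the encoding transform and the underlying channel $W$, not on the frozen set ${\cal A}$, so throughout one may work with the length-$2^m$ i-polar transform (the composition of the $\nbf{G}_2$'s with the interleavers) on its own.

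First I would set up the induction. For an arbitrary B-DMC $V$, let $V^-$ and $V^+$ denote the two channels produced by one combine-and-split step, and let $\{V_m^{(i)}\}_{i=0}^{2^m-1}$ be the synthesized channels of the \emph{standard} length-$2^m$ polar transform fed with $2^m$ i.i.d.\ copies of $V$. I would prove by induction on $m$: for every choice of interleavers, the $2^m$ synthesized channels of the length-$2^m$ i-polar transform fed with $2^m$ i.i.d.\ copies of $V$ form the same multiset as $\{V_m^{(i)}\}$, up to relabelings. The base case $m\le 1$ is immediate, since $\nbf{\Pi}_{0,j}$ is trivial and the length-$2$ i-polar transform is the standard one.

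For the inductive step, write the encoder output as $\nbf{x}=[\nbf{a}\nbf{\Pi}_{m-1,j}+\nbf{b},\ \nbf{b}]$, where $\nbf{a}$ and $\nbf{b}$ are the outputs of two length-$2^{m-1}$ i-polar transforms (those underlying ${\cal C}_{m-1,2j}$ and ${\cal C}_{m-1,2j+1}$) applied to the two halves of the input. Marginalizing the channel over the lower half of the input (a bijection onto $\nbf{b}$, since every interleaver and every $\nbf{G}_2$ is invertible over $\F$) shows that $\nbf{a}\nbf{\Pi}_{m-1,j}$ is observed through $2^{m-1}$ i.i.d.\ copies of $V^-$; since these copies are identical, $\nbf{\Pi}_{m-1,j}$ merely relabels them, so $\nbf{a}$ itself is observed through $2^{m-1}$ i.i.d.\ copies of $V^-$. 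Conditioning instead on the full upper half of the input --- which renders every coordinate of $\nbf{a}\nbf{\Pi}_{m-1,j}$ known --- shows that $\nbf{b}$ is observed through $2^{m-1}$ i.i.d.\ copies of $V^+$. Applying the induction hypothesis to the length-$2^{m-1}$ i-polar transform over $V^-$ and over $V^+$, the synthesized channels of the length-$2^m$ i-polar transform form the multiset $\{(V^-)_{m-1}^{(i)}\}\uplus\{(V^+)_{m-1}^{(i)}\}$, which is exactly $\{V_m^{(i)}\}$ by the usual recursive decomposition of the polar transform. Taking $V=W$ and $m=M$ gives the theorem.

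The only step the interleavers could affect is the pair of marginalizations, and that is where I expect to have to be careful: one must verify that inserting $\nbf{\Pi}_{m-1,j}$ on the upper branch does not couple the $2^{m-1}$ uses of $V^-$ (it does not --- a permutation of independent, identically distributed channels is statistically inert), and that the side information reaching the $V^+$ branch is not disturbed (it is not --- knowing $\nbf{a}$ and knowing $\nbf{a}\nbf{\Pi}_{m-1,j}$ are equivalent). The remainder is the standard polar recursion, so I expect the write-up to be short.
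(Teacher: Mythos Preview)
Your argument is correct and is essentially the same as the paper's: both peel off the outermost combine-and-split layer, observe that the interleaver merely permutes a block of identical channel copies (and is therefore statistically inert), and then recurse/induct inward. The paper phrases this as a global layer-by-layer reading of the channel-transformation diagram (at stage $\mu$ the de-interleavers $\nbf{\Pi}_{M-\mu-1,i}^{-1}$ act on $2^{M-\mu-1}$ copies of the same $W_{\mu+1}^{(2i)}$), whereas you package it as an induction on the transform depth $m$ with a universally quantified base channel $V$; the latter is a slightly cleaner formalization of the same idea.
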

\begin{proof}
Reading from right to left, Figure \ref{fig:rand_polar_ch_tr} illustrates the channel transformation process of the i-polar code for $N=8$ with synthesized channels de-interleaved by $\nbf{\Pi}_{1,0}^{-1}$, $\nbf{\Pi}_{1,1}^{-1}$, and $\nbf{\Pi}_{2,0}^{-1}$. The channel transformation process of the polar code is formed by replacing the de-interleavers with direct links  represented by dashed lines in Figure \ref{fig:rand_polar_ch_tr}. For channel transformation process of the polar code, the figure starts with $2^{M-1}$ copies of the transformation $(W,W)\longmapsto (W_1^{(0)}, W_1^{(1)})$. The transformation continues in butterfly patterns with $2^{M-\mu-1}$ copies of the transformation $(W_\mu^{(i)},W_\mu^{(i)})\longmapsto (W_{\mu+1}^{(2i)}, W_{\mu+1}^{(2i+1)})$ for $i=0, \ldots, 2^{\mu}-1$ and $\mu = 0, \ldots, M-1$ with $W_0^{(0)} = W$. Finally, the synthesized channels $W_{M}^{(i)}$ for $i=0, \ldots, 2^M-1$ can be obtained. The synthesized channels at the intermediate stages can be represented as a binary tree similar to that shown in Figure 6 of \cite{Arikan2009}. The root node represents the channel $W$. The root $W$ gives birth to an upper channel $W_1^{(0)}$ and a lower channel $W_1^{(1)}$, which are represented by the two nodes at level 1. The channel $W_1^{(0)}$ in turn gives birth to channels $W_2^{(0)}$ and  $W_2^{(1)}$, and the channel $W_1^{(1)}$ gives birth to channels $W_2^{(2)}$ and  $W_2^{(3)}$, and so on. Then based on the concept of a random tree process, the polarization effect is proved in \cite{Arikan2009}. We want to prove that inserting de-interleavers at the upper channels as that shown in Figure \ref{fig:rand_polar_ch_tr} does not change the polarization effect. Note that there are $2^{M-\mu-1}$ copies of the transformation $(W_\mu^{(i)},W_\mu^{(i)})\longmapsto (W_{\mu+1}^{(2i)}, W_{\mu+1}^{(2i+1)})$. For the i-polar code, after channel transformation, the   $2^{M-\mu-1}$ copies of the upper channels $W_{\mu+1}^{(2i)}$ are de-interleaved by $\nbf{\Pi}_{M-\mu-1, i}^{-1}$. Since the de-interleaver acts only on the $2^{M-\mu-1}$ channels of the same type $W_{\mu+1}^{(2i)}$, the outputs of the de-interleaver are just re-ordered channels of the same type $W_{\mu+1}^{(2i)}$ which is the same as that of the original polar code. Therefore, by induction, for the i-polar code, further transformation of the $2^{M-\mu-1}$ re-ordered channels of the same type gets the same synthesized channels as those of the polar code.  
\end{proof}

The SC or SCL decoder for polar codes can be easily modified to decode i-polar codes. 
As proved in Theorem \ref{th:thm0}, the i-polar code and polar code produce the same synthesized channels $W_M^{(i)}$ for $i=0, \ldots, 2^M-1$. 
Therefore, the same bit channel selection algorithm 
as those designed for polar codes can be employed for i-polar codes. 
It has been shown that the bit channel selection algorithms based on Gaussian approximation (GA) for density evolution such as those proposed in \cite{Trifonov2012, Chiu2013, Dai2017} are effective for binary-input additive white Gaussian noise (BI-AWGN) channels. Bhattacharyya parameter can calso be employed for bit channel selection \cite{Zhao2011}. In this paper, we employ the bit channel selection algorithm given in \cite{Chiu2013} for both i-polar and polar codes. 
For convenience, we give a brief review of the algorithm proposed in \cite{Chiu2013}. Assume that the all-zero codeword was transmitted. The bit LLR under SC decoding for the  channel $W_\mu^{(i)}$ is defined as $L_\mu^{(i)} = \log(W_\mu^{(i)}(\cdot|0)/W_\mu^{(i)}(\cdot|1))$. The idea of GA is to approximate the LLR as a Gaussian random variable with mean $\beta$ and variance $\sigma^2$ satisfying $\sigma^2 = 2\beta$. Therefore, the p.d.f.~of the LLR random variable can be described by a single parameter $\sigma$. The mutual information of the channel $W_\mu^{(i)}$, defined as $I_\mu^{(i)} = I(W_\mu^{(i)})$, was shown in \cite{Brink2001} to be 
\[
J(\sigma) = 1- \int_{-\infty}^{\infty} \frac{e^{-(x-\sigma^2/2)^2/(2\sigma^2)}}{\sqrt{2 \pi} \sigma} \log_2(1+e^{-x})dx,
\] 
where $\sigma^2$ is the variance of the LLR random variable $L_\mu^{(i)}$. We want to find the transformation of mutual information $(I_\mu^{(i)},I_\mu^{(i)})\longmapsto (I_{\mu+1}^{(2i)}, I_{\mu+1}^{(2i+1)})$ that corresponds to the mutual information for the channel transformation $(W_\mu^{(i)},W_\mu^{(i)})\longmapsto (W_{\mu+1}^{(2i)}, W_{\mu+1}^{(2i+1)})$. The initial condition  is $I_0^{(0)}=J(2/\sigma_n)$, where $\sigma_n^2=N_0/2$ is the noise variance of the AWGN channel. Now under SC decoding, assuming that the upper branch is correctly decoded, $L_{\mu+1}^{(2i+1)}$ is the sum of two i.i.d. Gaussian random variables with variance $[J^{-1}(I_\mu^{(i)})]^2$. Therefore  $L_{\mu+1}^{(2i+1)}$ is a Gaussian random variable with variance $2[J^{-1}(I_\mu^{(i)})]^2$, and hence the mutual
information $I_{\mu+1}^{(2i+1)}$ is given by
\beq
       I_{\mu+1}^{(2i+1)} = J(\sqrt{2} J^{-1}(I_\mu^{(i)})),
       \label{eq:im1}
\eeq
for $i = 0, \ldots, 2^\mu-1$.
Also, according to Proposition 4 of \cite{Arikan2009}, $I_{\mu+1}^{(2i)} + I_{\mu+1}^{(2i+1)} = 2I_\mu^{(i)}$,  and hence
\beq
    I_{\mu+1}^{(2i)} = 2I_{\mu}^{(i)} - J(\sqrt{2} J^{-1}(I_\mu^{(i)})),
    \label{eq:im2}
\eeq
for $i = 0, \ldots, 2^\mu-1$. Through the recursions of (\ref{eq:im1}) and (\ref{eq:im2}) with $\mu=0, \ldots, M-1$, we can calculate $I_{M}^{(i)}$ for $i = 0, \ldots, 2^M-1$. The subset ${\cal A}$ is selected such that if $i \in {\cal A}$ then $I_M^{(i)} > I_M^{(j)}$ for all $j \in {\cal A}^c$.

Given the same set ${\cal A}$, the i-polar code has the same performance level as that of the polar code under the SC decoder, since the synthesized channels, $W_M^{(i)}$ for $i \in {\cal A}$, are the same for both codes as shown in the proof of Theorem \ref{th:thm0}.
However, they have different performance levels when a more sophisticated decoder, such as the SCL decoder \cite{Tal2015} or stack decoder \cite{Niu2012, Niu2012-2}, is employed.
We will show that the WEF of the i-polar code is different from that of the polar code. This implies that these two codes have different performance levels under the ML decoder.  Actually, simulation results show that i-polar codes perform  better than polar codes under the SCL decoder. 


\section{WEF and IOWEF of I-Polar Codes}
\label{sec:WEF}
\subsection{$(N,K,{\cal A})$ Ensemble of I-Polar Codes}
As described in Section \ref{sec:IPC}, at the $m$th layer of the i-polar graph, there are $2^{M-m-1}$ interleavers of size $2^{m}$. For an interleaver of size $n$, there are $n!$ possible interleavers. The $(N,K,{\cal A})$ ensemble of i-polar codes is formed by all possible interleavers given the unfrozen bit set ${\cal A}$, of which the code length $N=2^M$ and dimension $K=|\cal {\cal A}|$. In theory, it is impossible to exhaustively enumerate the WEF of i-polar codes over all possible interleavers when the code size is large. To overcome this difficulty, we assume that all interleavers are selected independently at random and each interleaver follows the uniform assumption as that used in the analysis for turbo codes \cite{Benedetto1996}.

\begin{defn} \cite{Benedetto1996} 
A uniform interleaver of length $n$ is a probability device selected in random over all possible interleavers which maps a given input binary vector of weight $d$ to all $\left(n \atop  d\right)$ permutations with equal probability $1/\left(n \atop  d\right)$.
\end{defn}

\subsection{WEF and IOWEF}

\begin{figure}[!t]
\centering
\includegraphics[width=1.0\columnwidth]{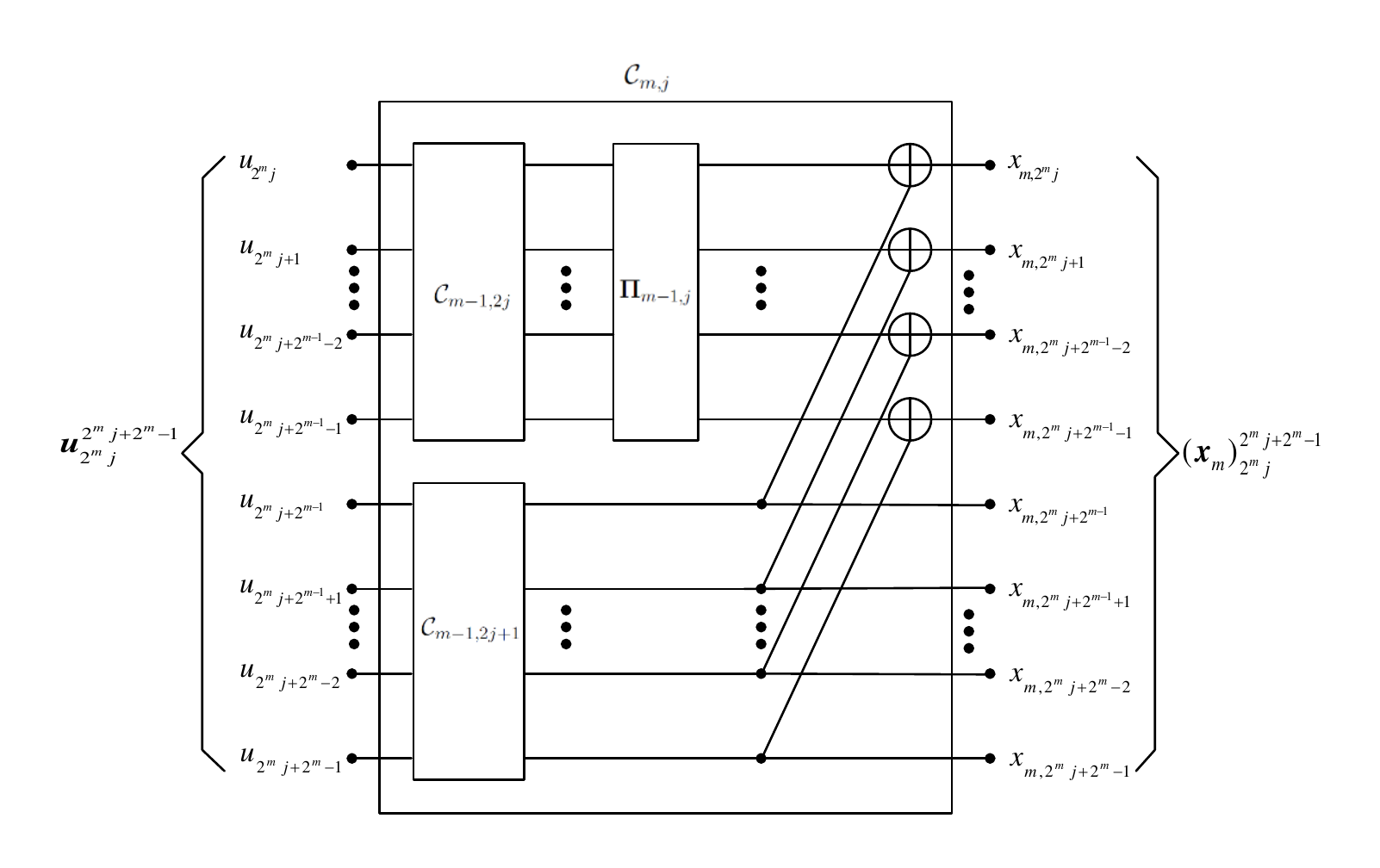}
\caption{Construction of ${\cal C}_{m, j}$ from ${\cal C}_{m-1, 2j}$ and ${\cal C}_{m-1, 2j+1}$. The input is $\nbf{u}_{2^mj}^{2^mj+2^m-1}$ and the output is $(\nbf{x}_m)_{2^mj}^{2^mj+2^m-1}$, where $\nbf{x}_m$ is the output at the $m$th stage of the i-polar encoder.} 
\label{fig:intp}
\end{figure}

Given an $(N, K)$ linear block code $\cal C$, its WEF is defined as
\bean
    A^{\cal C}(Y)&=&\sum_{\snbf{c} \in {\cal C}} Y^{w_H(\snbf{c})}\\
    &=&\sum_{d=0}^n A^{\cal C}_d Y^d,
\eean
where $w_H(\nbf{c})$ is the Hamming weight of $\nbf{c}$, $A^{\cal C}_d$ is the number of codewords of ${\cal C}$ with Hamming weight $d$, and $Y$ is a dummy variable. The WEF can be used to compute the exact probability of undetected errors and an upper bound on the BLER \cite{Divsalar1999}.

We define the input-output weight enumerating function (IOWEF) of the code ${\cal C}$ as 
\[
    A^{\cal C}(X,Y) = \sum_{w,d} A^{\cal C}_{w,d} X^w Y^d,
\]
where $A^{\cal C}_{w,d}$ denotes the number of codewords of ${\cal C}$ generated by an input message word of Hamming weight $w$ whose output codeword has Hamming weight $d$. It should be noted that the WEF is a polynomial in one variable and the IOWEF is a polynomial in two variables. The relation between the WEF and IOWEF is given by
\[
   A^{\cal C}(Y) = A^{\cal C}(X = 1, Y).
\]  
The WEF depends only on the code ${\cal C}$, as only weights of codewords are enumerated. However, the IOWEF depends on the encoder, as it depends on the pairs of Hamming weights of input message word and output codeword. Since there are many different encoders that generate the same code ${\cal C}$, we will assume that a specific encoder is employed when the IOWEF is considered.
The IOWEF can be used to compute the upper bound on the bit error rate (BER)  \cite{Divsalar1999}. Also it is important for the study of concatenated coding schemes. 

Due to the recursive equation (\ref{eq:rpr}), the code ${\cal C}_{m,j}$ can be represented as the graph shown in Figure \ref{fig:intp}. The code ${\cal C}_{m,j}$ forms a $(2^m, |{\cal A}_{2^mj}^{2^mj+2^m-1}|, {\cal A}_{2^mj}^{2^mj+2^m-1}-2^mj )$ ensemble of i-polar codes, where ${\cal A}_{a}^{b} \triangleq \{a,a+1, \ldots, b\} \cap {\cal A}$ and ${\cal A}-b \triangleq \{a-b: \forall a \in {\cal A}\}$.
The input of the encoder of ${\cal C}_{m,j}$ is $\nbf{u}_{2^mj}^{2^mj+2^m-1}$ and the output is $(\nbf{x}_m)_{2^mj}^{2^mj+2^m-1}$, where $\nbf{x}_m$ is the output vector at the $m$-th stage of the $i$-polar encoder.
Define the  WEF of the i-polar code ${\cal C}_{m,j}$ averaged over the ensemble as $A^{{\cal C}_{m,j}}(Y)$. 
Assume that the WEFs $A^{{\cal C}_{m-1,2j}}(Y)$ and $A^{{\cal C}_{m-1,2j+1}}(Y)$ are known. Then $A^{{\cal C}_{m,j}}(Y)$ is a function of $A^{{\cal C}_{m-1,2j}}(Y)$ and $A^{{\cal C}_{m-1,2j+1}}(Y)$. The following lemma is important to calculate $A^{{\cal C}_{m,j}}(Y)$.
\begin{lem}
\label{th:lem1}
Let $\nbf{c}_1$ and $\nbf{c}_2$ be length-$n$ binary vectors with Hamming weights $d_1$ and $d_2$, respectively. Assume that $\nbf{\Pi}$ is a uniform interleaver. 
Then the weight distribution for $\nbf{c}_1 \nbf{\Pi} + \nbf{c}_2|\nbf{c}_2$
averaged over all possible interleavers $\nbf{\Pi}$ is given by
\bean
     E_{\snbf{\Pi}}\left[Y^{w_H( \snbf{c}_1 \snbf{\Pi} + \snbf{c}_2|\snbf{c}_2)}\right] & \triangleq & \sum_{\snbf{\Pi}} P\{\nbf{\Pi}\} Y^{w_H( \snbf{c}_1 \snbf{\Pi} + \snbf{c}_2|\snbf{c}_2)} \\
&=&\sum_{k = \max(0, d_1+d_2 - n)}^{\min(d_1,d_2)} \frac{\left(d_2 \atop k\right)\left(n-d_2 \atop d_1-k\right)}{\left(n \atop d_1\right)}Y^{d_1+2d_2-2k}.
\eean
\end{lem}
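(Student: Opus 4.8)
The plan is to count, among all $n!$ permutations $\nbf{\Pi}$, how many produce each possible value of $w_H(\nbf{c}_1\nbf{\Pi}+\nbf{c}_2|\nbf{c}_2)$, and then divide by $n!$. Since $w_H(\nbf{c}_1\nbf{\Pi}+\nbf{c}_2|\nbf{c}_2) = w_H(\nbf{c}_1\nbf{\Pi}+\nbf{c}_2) + w_H(\nbf{c}_2) = w_H(\nbf{c}_1\nbf{\Pi}+\nbf{c}_2) + d_2$, the only quantity that varies with $\nbf{\Pi}$ is $w_H(\nbf{c}_1\nbf{\Pi}+\nbf{c}_2)$. First I would observe that $\nbf{c}_1\nbf{\Pi}$ ranges over all binary vectors of weight $d_1$, and, by the uniform interleaver definition already stated in the paper, it takes each such vector with equal probability $1/\binom{n}{d_1}$. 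So the averaging reduces to choosing a weight-$d_1$ support set $S$ uniformly among the $\binom{n}{d_1}$ subsets of $\{0,\dots,n-1\}$, and evaluating $w_H(\nbf{e}_S + \nbf{c}_2)$ where $\nbf{e}_S$ is the indicator vector of $S$.

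Next I would compute $w_H(\nbf{e}_S+\nbf{c}_2)$ by inclusion–exclusion on overlaps. Let $T$ be the support of $\nbf{c}_2$, so $|T|=d_2$, and let $k = |S\cap T|$ be the number of positions where both vectors have a $1$. At those $k$ positions the XOR is $0$; at the $d_1-k$ positions in $S\setminus T$ the XOR is $1$; at the $d_2-k$ positions in $T\setminus S$ the XOR is $1$; elsewhere it is $0$. Hence $w_H(\nbf{e}_S+\nbf{c}_2) = (d_1-k)+(d_2-k) = d_1+d_2-2k$, and therefore $w_H(\nbf{c}_1\nbf{\Pi}+\nbf{c}_2|\nbf{c}_2) = d_1+d_2-2k+d_2 = d_1+2d_2-2k$, matching the exponent in the claim.

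It then remains to count the number of weight-$d_1$ sets $S$ with $|S\cap T| = k$. Choosing which $k$ of the $d_2$ elements of $T$ lie in $S$ gives $\binom{d_2}{k}$ options, and choosing the remaining $d_1-k$ elements of $S$ from the $n-d_2$ positions outside $T$ gives $\binom{n-d_2}{d_1-k}$ options; these choices are independent, so the count is $\binom{d_2}{k}\binom{n-d_2}{d_1-k}$. Dividing by $\binom{n}{d_1}$ gives the stated probability weight on $Y^{d_1+2d_2-2k}$. Finally I would pin down the range of $k$: we need $0\le k\le d_2$ and $0\le d_1-k\le n-d_2$, i.e. $k\ge d_1-(n-d_2)=d_1+d_2-n$ and $k\le d_1$; combined with $k\ge 0$ and $k\le d_2$, this is exactly $\max(0,d_1+d_2-n)\le k\le\min(d_1,d_2)$, and outside this range the binomial coefficients vanish anyway. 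Summing over $k$ yields the formula.

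The only place requiring care — and the step I would flag as the main obstacle — is justifying that $\nbf{c}_1\nbf{\Pi}$ is uniformly distributed over all weight-$d_1$ vectors as $\nbf{\Pi}$ ranges uniformly over $S_n$. This is precisely the content of the uniform-interleaver definition quoted in the paper, applied to the input $\nbf{c}_1$ of weight $d_1$, so the reduction is immediate; everything after that is elementary counting. One should also note that the argument is purely combinatorial in $d_1,d_2,n$ and does not depend on the particular positions of the ones in $\nbf{c}_1$ or $\nbf{c}_2$, which is why the result depends only on the weights, as claimed.
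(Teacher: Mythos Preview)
Your proposal is correct and follows essentially the same argument as the paper: reduce to the distribution of $w_H(\nbf{c}_1\nbf{\Pi}+\nbf{c}_2)$ via the uniform-interleaver definition, parametrize by the overlap $k$ between the two supports, count the $\binom{d_2}{k}\binom{n-d_2}{d_1-k}$ configurations, and then add the extra $d_2$ from the appended copy of $\nbf{c}_2$. Your write-up is in fact a bit more explicit than the paper's (e.g.\ the derivation of the range of $k$), but the method is identical.
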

\begin{proof}
We first derive the weight distribution for $ \nbf{c}_1 \nbf{\Pi} + \nbf{c}_2$. 
Since the weight of $\nbf{c}_1$ is $d_1$, there are $\left(n \atop d_1\right)$ permutations of $\nbf{c}_1\nbf{\Pi}$ with equal probability $1/\left(n \atop d_1\right)$. Among these permutations, let $k$ be the number of positions at which the elements in $ \nbf{c}_1\nbf{\Pi}$ and $\nbf{c}_2$ are both equal to 1. The minimum value of $k$ can be easily shown to be $ \max(0, d_1+d_2 - n)$ and the maximum value of $k$ to be $\min(d_1,d_2)$. Given the value $k$, the Hamming weight of $ \nbf{c}_1 \nbf{\Pi} + \nbf{c}_2$ is $d_1+d_2-2k$, and there are a total of $\left(d_2 \atop k\right)\left(n-d_2 \atop d_1-k\right)$ such permutations. Therefore, the probability that  $\nbf{c}_1 \nbf{\Pi}  + \nbf{c}_2$ has weight $d_1+d_2-2k$ is given by $\left(d_2 \atop k\right)\left(n-d_2 \atop d_1-k\right)/\left(n \atop d_1\right)$. Finally, the additional concatenation of $\nbf{c}_2$ gives the Hamming weight of $\nbf{c}_1 \nbf{\Pi} + \nbf{c}_2|\nbf{c}_2$ as $d_1+d_2-2k+d_2=d_1+2d_2-2k$. The proof is completed. 
\end{proof}
We are ready to calculate the average WEF $A^{{\cal C}_{m,j}}(Y)$ based on the recursive equation (\ref{eq:rpr}). 

\begin{thm}
\label{lem:awef}
Given the WEFs $A^{{\cal C}_{m-1,2j}}(Y)$ and $A^{{\cal C}_{m-1,2j+1}}(Y)$, the  WEF $A^{{\cal C}_{m,j}}(Y)$ of the code ${\cal C}_{m,j} ={\cal C}_{m-1,2j} \nbf{\Pi}_{m-1,j} + {\cal C}_{m-1,2j+1}|{\cal C}_{m-1,2j+1}$ averaged over all possible interleavers $\nbf{\Pi}_{m-1,j}$ is 
\bean
     A^{{\cal C}_{m,j}}(Y)  
     &=& \sum_{d_1, d_2}A^{{\cal C}_{m-1, 2j}}_{d_1} A^{{\cal C}_{m-1, 2j+1}}_{d_2} \sum_{k = \max(0, d_1+d_2 - n)}^{\min(d_1,d_2)} \frac{\left(d_2 \atop k\right)\left(n-d_2 \atop d_1-k\right)}{\left(n \atop d_1\right)} Y^{d_1+2d_2-2k} \\
     &\triangleq& H_m(A^{{\cal C}_{m-1, 2j}}(Y), A^{{\cal C}_{m-1, 2j+1}}(Y)), 
\eean
where $n = 2^{m-1}$.
\end{thm}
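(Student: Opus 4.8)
The plan is to carry the definition of the WEF through the recursion (\ref{eq:rpr}) so that all the averaging collapses onto a single application of Lemma \ref{th:lem1}. Throughout, write $n = 2^{m-1}$ for the common length of ${\cal C}_{m-1,2j}$ and ${\cal C}_{m-1,2j+1}$, and abbreviate $\nbf{\Pi} = \nbf{\Pi}_{m-1,j}$. First I would fix an arbitrary realization of $\nbf{\Pi}$ and record the elementary but essential fact that the map $(\nbf{c}_1,\nbf{c}_2) \mapsto [\,\nbf{c}_1\nbf{\Pi} + \nbf{c}_2,\ \nbf{c}_2\,]$ is a bijection from ${\cal C}_{m-1,2j} \times {\cal C}_{m-1,2j+1}$ onto ${\cal C}_{m,j}$: surjectivity is the very definition of ${\cal C}_1 + {\cal C}_2|{\cal C}_2$, while injectivity holds because $\nbf{c}_2$ is read off from the last $n$ coordinates of the codeword and then $\nbf{c}_1$ is recovered from the first $n$ coordinates using that $\nbf{\Pi}$ is invertible. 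Hence every codeword of ${\cal C}_{m,j}$ is produced by exactly one pair, so for this realization $A^{{\cal C}_{m,j}}(Y) = \sum_{\snbf{c}_1 \in {\cal C}_{m-1,2j}}\sum_{\snbf{c}_2 \in {\cal C}_{m-1,2j+1}} Y^{w_H(\snbf{c}_1\snbf{\Pi} + \snbf{c}_2|\snbf{c}_2)}$ with no term counted twice.

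Next I would average over $\nbf{\Pi}$ drawn from the uniform interleaver. Since the double sum is finite, linearity of expectation moves the expectation inside term by term, giving $A^{{\cal C}_{m,j}}(Y) = \sum_{\snbf{c}_1}\sum_{\snbf{c}_2} E_{\snbf{\Pi}}\!\left[Y^{w_H(\snbf{c}_1\snbf{\Pi} + \snbf{c}_2|\snbf{c}_2)}\right]$. Now Lemma \ref{th:lem1} evaluates each inner expectation, and the decisive feature is that its value depends on $\nbf{c}_1$ and $\nbf{c}_2$ only through the weights $d_1 = w_H(\nbf{c}_1)$ and $d_2 = w_H(\nbf{c}_2)$. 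Grouping the pairs by $(d_1,d_2)$ --- there are precisely $A^{{\cal C}_{m-1,2j}}_{d_1}A^{{\cal C}_{m-1,2j+1}}_{d_2}$ of them --- and substituting the closed form from Lemma \ref{th:lem1} yields exactly the stated double sum, which is the definition of $H_m$.

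Finally I would address the fact that, in the intended reading, $A^{{\cal C}_{m-1,2j}}(Y)$ and $A^{{\cal C}_{m-1,2j+1}}(Y)$ on the right-hand side are themselves averages over the interleavers inside the two subtrees, whereas the computation above held the subtree codes fixed. Here I would use the tower property together with the mutual independence of the interleavers in the left subtree, the right subtree, and at stage $m$: conditioning on all the subtree interleavers, the previous two paragraphs give $H_m$ applied to the realized subtree WEFs, and because each coefficient of $H_m(\cdot,\cdot)$ is a bilinear form in the coefficient sequences of its two arguments, taking the remaining expectation commutes with $H_m$ and replaces each subtree WEF by its ensemble average. Thus the same recursion governs the fully averaged WEFs. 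I expect the only points requiring genuine care to be the no-double-counting claim --- which is what licenses rewriting the WEF as a sum over pairs --- and this last independence-and-bilinearity bookkeeping; the real combinatorics has already been packaged into Lemma \ref{th:lem1}.
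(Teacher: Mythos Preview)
Your proposal is correct and follows essentially the same route as the paper: rewrite the WEF as a double sum over pairs $(\nbf{c}_1,\nbf{c}_2)$, push the expectation over $\nbf{\Pi}_{m-1,j}$ inside by linearity, apply Lemma~\ref{th:lem1} termwise, and group by the weights $(d_1,d_2)$. You are more careful than the paper on two points it leaves implicit --- the bijection that rules out double counting, and the tower-property/bilinearity step that lets the recursion propagate ensemble averages through the subtree interleavers --- but the approach is the same.
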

\begin{proof}
The averaged WEF of ${\cal C}_{m,j}$ can be written as
\bean
     A^{{\cal C}_{m,j}}(Y) &=& E_{\snbf{\Pi}_{m-1,j}} \left[\sum_{\snbf{c}_1 \in {\cal C}_{m-1, 2j}} \sum_{\snbf{c}_2 \in {\cal C}_{m-1, 2j+1}} Y^{w_H( \snbf{c}_1 \snbf{\Pi}_{m-1,j} + \snbf{c}_2 |\snbf{c}_2)}\right] \\
     &=& \sum_{\snbf{c}_1 \in {\cal C}_{m-1, 2j}} \sum_{\snbf{c}_2 \in {\cal C}_{m-1, 2j+1}} E_{\snbf{\Pi}_{m-1,j}} \left[  Y^{w_H(\snbf{c}_1 \snbf{\Pi}_{m-1,j}  + \snbf{c}_2 |\snbf{c}_2)}\right].
\eean
By Lemma \ref{th:lem1}, we have
\bean
&& A^{{\cal C}_{m,j}}(Y) \\
&& =
 \sum_{\snbf{c}_1 \in {\cal C}_{m-1, 2j}} \sum_{\snbf{c}_2 \in {\cal C}_{m-1, 2j+1}} \sum_{k = \max(0, w_H(\snbf{c}_1)+w_H(\snbf{c}_2) - n)}^{\min(w_H(\snbf{c}_1), w_H(\snbf{c}_2))} \frac{\left(w_H(\snbf{c}_2) \atop k\right)\left(n-w_H(\snbf{c}_2) \atop w_H(\snbf{c}_1) -k\right)}{\left(n \atop w_H(\snbf{c}_1)\right)}Y^{w_H(\snbf{c}_1)+2w_H(\snbf{c}_2)-2k}.
\eean
The average number of codeword combinations of $\nbf{c}_1$ and $\nbf{c}_2$ for $w_H(\nbf{c}_1) = d_1$ and $w_H(\nbf{c}_2) = d_2$ is $A^{{\cal C}_{m-1, 2j}}_{d_1} A^{{\cal C}_{m-1, 2j+1}}_{d_2}$. The proof is completed.
\end{proof}
Similarly, for the average IOWEF, we have the following theorem.
\begin{thm}
\label{thm:aiowef1}
Given the IOWEFs $A^{{\cal C}_{m-1,2j}}(X,Y)$ and $A^{{\cal C}_{m-1,2j+1}}(X,Y)$, the  IOWEF $A^{{\cal C}_{m,j}}(X, Y)$ of the code ${\cal C}_{m,j} ={\cal C}_{m-1,2j} \nbf{\Pi}_{m-1,j} + {\cal C}_{m-1,2j+1}|{\cal C}_{m-1,2j+1}$ averaged over all possible interleavers $\nbf{\Pi}_{m-1,j}$ is 
\bean
     A^{{\cal C}_{m,j}}(X, Y) 
     &=& \sum_{w_1, w_2, d_1, d_2}A^{{\cal C}_{m-1, 2j}}_{w_1, d_1} A^{{\cal C}_{m-1, 2j+1}}_{w_2, d_2} \sum_{k = \max(0, d_1+d_2 - n)}^{\min(d_1,d_2)} \frac{\left(d_2 \atop k\right)\left(n-d_2 \atop d_1-k\right)}{\left(n \atop d_1\right)}X^{w_1+w_2}  Y^{d_1+2d_2-2k} \\
     &\triangleq& F_m(A^{{\cal C}_{m-1,2j}}(X,Y), A^{{\cal C}_{m-1,2j+1}}(X,Y)),
\eean
where $n = 2^{m-1}$.
\end{thm}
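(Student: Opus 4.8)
The plan is to mirror the argument of Theorem~\ref{lem:awef}, carrying along an additional dummy variable $X$ that records the Hamming weight of the input word. The key structural fact, read off from Figure~\ref{fig:intp} and the recursion~(\ref{eq:rpr}), is that the (nested) encoder of ${\cal C}_{m,j}$ takes its input $\nbf{u}_{2^mj}^{2^mj+2^m-1}$, splits it into two halves, runs the encoder of ${\cal C}_{m-1,2j}$ on one half to produce a codeword $\nbf{c}_1$ and the encoder of ${\cal C}_{m-1,2j+1}$ on the other half to produce $\nbf{c}_2$, and then outputs $\nbf{c}_1\nbf{\Pi}_{m-1,j}+\nbf{c}_2\,|\,\nbf{c}_2$. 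Consequently, if $\nbf{c}_1$ arises from an input of weight $w_1$ and $\nbf{c}_2$ from an input of weight $w_2$, then the resulting codeword of ${\cal C}_{m,j}$ arises from an input of weight $w_1+w_2$; and since $\nbf{\Pi}_{m-1,j}$ permutes only code bits, it has no effect on the input weight.

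With that in hand I would write
\[
  A^{{\cal C}_{m,j}}(X,Y) = E_{\snbf{\Pi}_{m-1,j}}\!\left[ \sum_{\snbf{c}_1\in{\cal C}_{m-1,2j}} \sum_{\snbf{c}_2\in{\cal C}_{m-1,2j+1}} X^{w_1(\snbf{c}_1)+w_2(\snbf{c}_2)}\, Y^{w_H(\snbf{c}_1\snbf{\Pi}_{m-1,j}+\snbf{c}_2\,|\,\snbf{c}_2)} \right],
\]
where $w_1(\snbf{c}_1)$ and $w_2(\snbf{c}_2)$ denote the input weights producing $\nbf{c}_1$ and $\nbf{c}_2$ under the respective component encoders. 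By linearity of expectation I pull $E_{\snbf{\Pi}_{m-1,j}}$ inside both sums over codewords; since $X^{w_1+w_2}$ does not depend on $\nbf{\Pi}_{m-1,j}$, it factors out of the expectation, and what remains is exactly the quantity computed in Lemma~\ref{th:lem1} with $d_1=w_H(\snbf{c}_1)$, $d_2=w_H(\snbf{c}_2)$, and $n=2^{m-1}$. Substituting its closed form yields a triple sum over $\nbf{c}_1$, $\nbf{c}_2$, and $k$ whose summand is $\frac{\binom{d_2}{k}\binom{n-d_2}{d_1-k}}{\binom{n}{d_1}}\,X^{w_1+w_2}\,Y^{d_1+2d_2-2k}$.

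To finish, I would group the codeword sums by the weight pairs $(w_1,d_1)$ and $(w_2,d_2)$: by definition the number of $\nbf{c}_1\in{\cal C}_{m-1,2j}$ with input weight $w_1$ and Hamming weight $d_1$ is $A^{{\cal C}_{m-1,2j}}_{w_1,d_1}$, and likewise the number of such $\nbf{c}_2$ is $A^{{\cal C}_{m-1,2j+1}}_{w_2,d_2}$. Replacing $\sum_{\snbf{c}_1}\sum_{\snbf{c}_2}$ by $\sum_{w_1,d_1,w_2,d_2}A^{{\cal C}_{m-1,2j}}_{w_1,d_1}A^{{\cal C}_{m-1,2j+1}}_{w_2,d_2}$ then gives the displayed formula and, by definition, the map $F_m$. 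I do not anticipate a genuine obstacle, since the interleaver averaging has already been isolated in Lemma~\ref{th:lem1}; the only point demanding care is the bookkeeping claim that the combined encoder's input weight splits as $w_1+w_2$ and is untouched by $\nbf{\Pi}_{m-1,j}$. This is precisely where the IOWEF's dependence on the specific nested encoder of~(\ref{eq:rpr}) enters (unlike the WEF, which depends only on the code), so I would make that step explicit.
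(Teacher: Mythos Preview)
Your proposal is correct and is precisely the approach the paper intends: the paper does not give an explicit proof of this theorem but introduces it with ``Similarly, for the average IOWEF, we have the following theorem,'' pointing back to the proof of Theorem~\ref{lem:awef}. Your write-up carries out that ``similarly'' in full, applying Lemma~\ref{th:lem1} to handle the $Y$-exponent while tracking the input weight via the extra variable $X$, and your explicit remark that the combined input weight splits as $w_1+w_2$ (with $\nbf{\Pi}_{m-1,j}$ acting only on code bits) is exactly the one new ingredient needed beyond the WEF case.
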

Based on Theorem \ref{lem:awef} and Theorem \ref{thm:aiowef1}, we can compute the average WEF and IOWEF using the following recursive equations
\bea
A^{{\cal C}_{m,j}}(Y) &=& H_m(A^{{\cal C}_{m-1, 2j}}(Y), A^{{\cal C}_{m-1, 2j+1}}(Y)), \label{eq:wefr}\\
A^{{\cal C}_{m,j}}(X, Y) &=&  F_m(A^{{\cal C}_{m-1,2j}}(X,Y), A^{{\cal C}_{m-1,2j+1}}(X,Y)), \label{eq:iowefr}
\eea
for $m=1, \ldots, M$ and $j = 0, \ldots, 2^{M-m}-1$, where the initial conditions are given by
\bean
   A^{{\cal C}_{0,j}}(Y) &=& \left\{
   \begin{array}{ll}
   1 + Y, & j \in {\cal A} \\
   1, & j \in {\cal A}^c
   \end{array}   
   \right.,
\eean
and
\bean
   A^{{\cal C}_{0,j}}(X,Y) &=& \left\{
   \begin{array}{ll}
   1 + XY, & j \in {\cal A} \\
   1, & j \in {\cal A}^c
   \end{array}   
   \right.. 
\eean

The interleaver can also be applied for every output of the lower encoder. In this case, the recursive equation becomes
\[
    \bar{\cal C}_{m,j} =\bar{\cal C}_{m-1,2j}  + \bar{\cal C}_{m-1,2j+1}\nbf{\Pi}_{m-1,j}|\bar{\cal C}_{m-1,2j+1}\nbf{\Pi}_{m-1,j},
\]
for $m=1, \ldots, M$ and $j = 0, \ldots, 2^{M-m}-1$. The initial conditions are $\bar{\cal C}_{0,j} = \{0, 1\}$ if $j \in {\cal A}$ and $\bar{\cal C}_{0,j} = \{0\}$ if $j \in {\cal A}^c$. De-interleaving   $\bar{\cal C}_{m-1,2j}  + \bar{\cal C}_{m-1,2j+1}\nbf{\Pi}_{m-1,j}$ and $\bar{\cal C}_{m-1,2j+1}\nbf{\Pi}_{m-1,j}$ by $\nbf{\Pi}_{m-1,j}^{-1}$ does not change the WEF of the code. Therefore, the following code has the same WEF as $\bar{\cal C}_{m,j}$
\[
       \bar{\cal C}_{m-1,2j}\bar{\nbf{\Pi}}_{m-1,j}  + \bar{\cal C}_{m-1,2j+1}|\bar{\cal C}_{m-1,2j+1},
\]
where $\bar{\nbf{\Pi}}_{m-1,j} = \nbf{\Pi}_{m-1,j}^{-1}$. Since the mapping $\nbf{\Pi}_{m-1,j} \mapsto \bar{\nbf{\Pi}}_{m-1,j}$ is bijective, the interleaver $\bar{\nbf{\Pi}}_{m-1,j}$ is uniform if $\nbf{\Pi}_{m-1,j}$ is uniform. Comparing to (\ref{eq:rpr}), as averaged over all interleavers, $\bar{\cal C}_{m,j}$ has the same average WEF as ${\cal C}_{m,j}$. Therefore, it is sufficient to consider only the former option.

\subsection{WEFs of (32, 16) I-Polar Code and Polar Code}
\begin{table}
\caption{WEF(A): ensemble average WEF of the $(32, 16)$ i-polar code, WEF(B): sample average WEF of the (32, 16) i-polar code with 1000 realizations, WEF(C): exact WEF of the (32, 16) polar code \label{tab:wef32}}
\centering
   \begin{tabular} {|c|r|r|r|r|r|} \hline
   Weight &WEF(A) & WEF(B) & WEF(C) &Type 1 & Type 2 \\ 
   $d$    & Coeff. & Coeff. &  Coeff. & Coeff. & Coeff. \\
   \hline
     0    &1.00      &  1.00         &1       &1         &1       \\ \hline
     4    &8.00      &  8.00         &8       &8         &8       \\ \hline 
     8    &476.24    &  476.29       &700     &476       &508     \\ \hline
     10   &1790.05   &  1789.70      &0       &1792      &1536     \\ \hline 
     12   &7230.82   &  7232.06      &13496   &7224      &8120     \\ \hline
     14   &12530.35  &  12527.87     &0       &12544     &10752     \\ \hline
     16   &21463.06  &  21466.16     &37126   &21446     &23686     \\ \hline
     18   &12530.35  &  12527.87     &0       &12544     &10752     \\ \hline
     20   &7230.82   &  7232.06      &13496   &7224      &8120     \\ \hline
     22   &1790.05   &  1789.70      &0       &1792      &1536     \\ \hline 
     24   &476.24    &  476.29       &700     &476       &508     \\ \hline
     28   &8.00      &  8.00         &8       &8         &8     \\ \hline
     32   &1.00      &  1.00         &1       &1         &1     \\ \hline
   \end{tabular}
\end{table}
The WEFs of the (32, 16) i-polar code and polar code are compared. The set ${\cal A}$ which is used for the i-polar code and polar code is given by
\[
  {\cal A} = \{11,    13,    14,    15,    19,    21,    22,    23,    24,    25,    26,    27,    28,    29,    30,    31\}.
\]
This set ${\cal A}$ is obtained by using the bit channel selection algorithm described previously which was proposed in \cite{Chiu2013}. Table \ref{tab:wef32} gives the coefficients of WEFs of all test cases, of which the Hamming weights, denoted as $d$, are listed in the first column and the remaining columns are the coefficients $A_d$ of all test cases. The WEF averaged over the ensemble of i-polar codes is denoted as WEF(A), which is computed based on the recursive equation (\ref{eq:wefr}). Since the code size is small, the WEF of a realization of i-polar codes can be enumerated exhaustively. We take 1000 independent realizations of i-polar codes and compute the WEF of each realization. In Table \ref{tab:wef32}, WEF(B) denotes the sample average of the WEFs over 1000 realizations. The WEF of the polar code is also enumerated exhaustively and is denoted as WEF(C). It can be observed that the sample average WEF(B) is very close to the (analytical) ensemble average WEF(A). Also, among the 1000 realizations, only two types of WEFs are observed, denoted as WEF type 1 and WEF type 2 as given in Table \ref{tab:wef32}. Among the 1000 realizations, the WEF type 1 and WEF type 2 appear 991 times and 9 times, respectively. 
Note that WEF type 1 and WEF type 2 are just two WEFs that occur with higher probability than the others. There are other WEFs, e.g., the WEF of the polar code, with small probabilities that do not appear among 1000 realizations.
The WEF type 1 and WEF type 2 are all close to the WEF(A), which means that, with high probability, any realizations are as good as the ensemble average WEF(A).  
The WEF(C) of the polar code concentrates to a smaller number of Hamming weights, i.e., there are no codewords of Hamming weights 10, 14, 18, and 22 as shown in Table \ref{tab:wef32}. The reason is that the i-polar code contains interleavers which have the effect of spreading the Hamming weights of codewords widely. 

For linear codes, the minimum Hamming weight, denoted as $d_{\min}$, and its multiplicity, denoted as $A_{d_{\min}}$, dominate the performance at high SNRs. Table \ref{tab:wef32} shows that both polar code and i-polar code have the same number of codewords 8 with minimum Hamming weight 4. This means that both codes have the the same error probability at high SNRs with ML decoding. This phenomenon can be observed from the upper bounds and simulated BLER curves of both codes shown in Figure \ref{fig:polar32_16}. 

The parameters $d_{\min}$ and $A_{d_{\min}}$ for the i-polar and polar codes with $N = 512$ and $K$ varying from 32 to 480 are shown in Figure \ref{fig:dminAdmin}. The parameters $d_{\min}$ and $A_{d_{\min}}$ for i-polar codes are obtained through the recursive equation (\ref{eq:wefr}).  Since there are no analytical WEFs for polar codes, we use the SCL decoder with $L = 5 \times 10^5$ to search for the minimum weight codewords as that proposed in \cite{Li2012}. The results show that both codes have the same minimum Hamming weight. However, the parameters ${A}_{d_{\min}}$ of i-polar codes are smaller than or equal to those of polar codes, which means that, for some cases, i-polar codes perform better than polar codes with ML decoding at high SNRs.



\begin{figure}[!t]
\centering
\includegraphics[width=1.0\columnwidth]{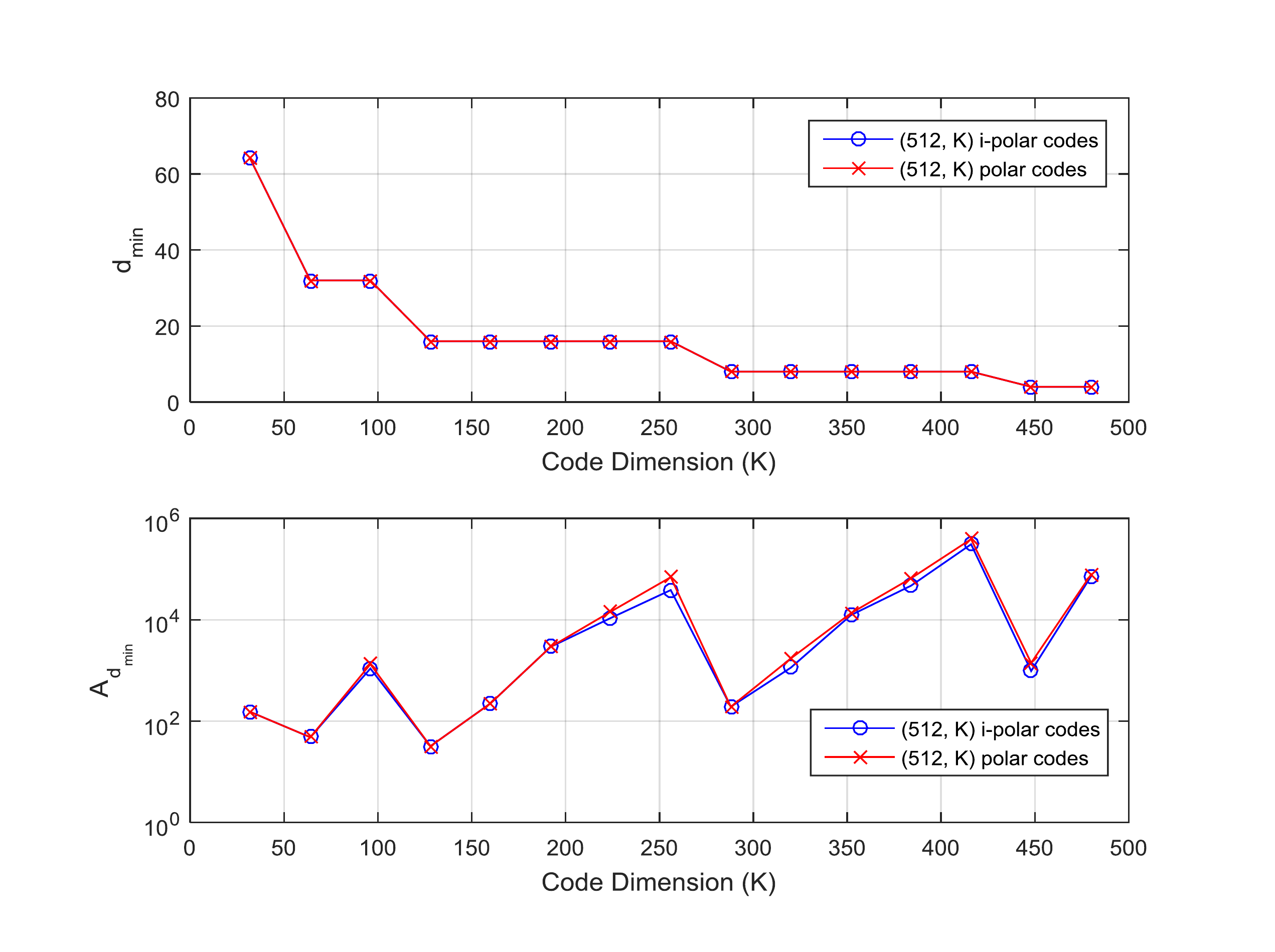}
\caption{Minimum Hamming weights and their multiplicities of $(512, K)$ i-polar  and polar codes with $K$ varying from 32 to 480.} 
\label{fig:dminAdmin}
\end{figure}

\subsection{BLER Upper Bound}
In this paper, we focus on the performance of codes over BI-AWGN channels. For BI-AWGN channels, the $i$th received signal can be represented as
\[
       y_i = \sqrt{E_s}(1-2c_i) + w_i, \ \ \ \ \mbox{for $i=0,1,\ldots, N-1$},
\] 
where $c_i \in \{0,1\}$ is the $i$-th bit of the codeword $\nbf{c} \in {\cal C}$, $w_i$ is the zero-mean additive white Gaussian noise with $E[w_i^2]=N_0/2$, and $E_s$ is the symbol energy.
Given the WEF of a code ${\cal C}$, the union bound on the BLER over BI-AWGN channels is a function of the WEF given by
\beq
    P_{BLER} \leq \sum_{d\neq 0} A^{\cal C}_d Q\left(\sqrt{2d \rho}\right),
\label{eq:PBLER}
\eeq
where $\rho$ is the signal-to-noise ratio defined as $2E_s/N_0$. However, the union bound may be too loose at low SNRs. A tighter upper bound, called {\em simple bound}, was proposed in \cite{Divsalar1999}. For convenience, the bound is given here as
\beq
   P_{BLER} \leq \sum_{d = d_{\mbox{\scriptsize min}}}^{N-K+1} \min\left\{e^{-nE(\rho, \delta)}, A^{\cal C}_d Q\left(\sqrt{2d \rho}\right)\right\},
\label{eq:PBLERU}
\eeq
where $\delta = d/N$, and with $r(\delta) = \ln[A^{\cal C}_d]/N$, 
\[
E(\rho, \delta) = \frac{1}{2} \ln \left[1-2c_0(\delta)f(\rho, \delta)\right] + \frac{\rho f(\rho, \delta)}{1+f(\rho, \delta)}, \ \ c_0(\delta)< \rho < \frac{e^{2r(\delta)}-1}{2\delta(1-\delta)}.
\]
Otherwise,
\[
E(\rho, \delta) = -r(\delta) + \delta \rho.
\]
The functions $c_0(\delta)$ and $f(\rho, \delta)$ are given by
\[
    c_0(\delta) = \left(1 - e^{-2r(\delta)}\right) \frac{1-\delta}{2\delta},
\]
and
\[
   f(\rho, \delta) = \sqrt{\frac{\rho}{c_0(\delta)}+2\rho+\rho^2}-\rho-1.
\]
It should be noted that a similar bound can be employed to obtain the upper bound on the bit error rate (BER) if the IOWEF of the code is known. However, in this paper, we only focus on the BLER upper bound. 

\begin{figure}[!t]
\centering
\includegraphics[width=1.0\columnwidth]{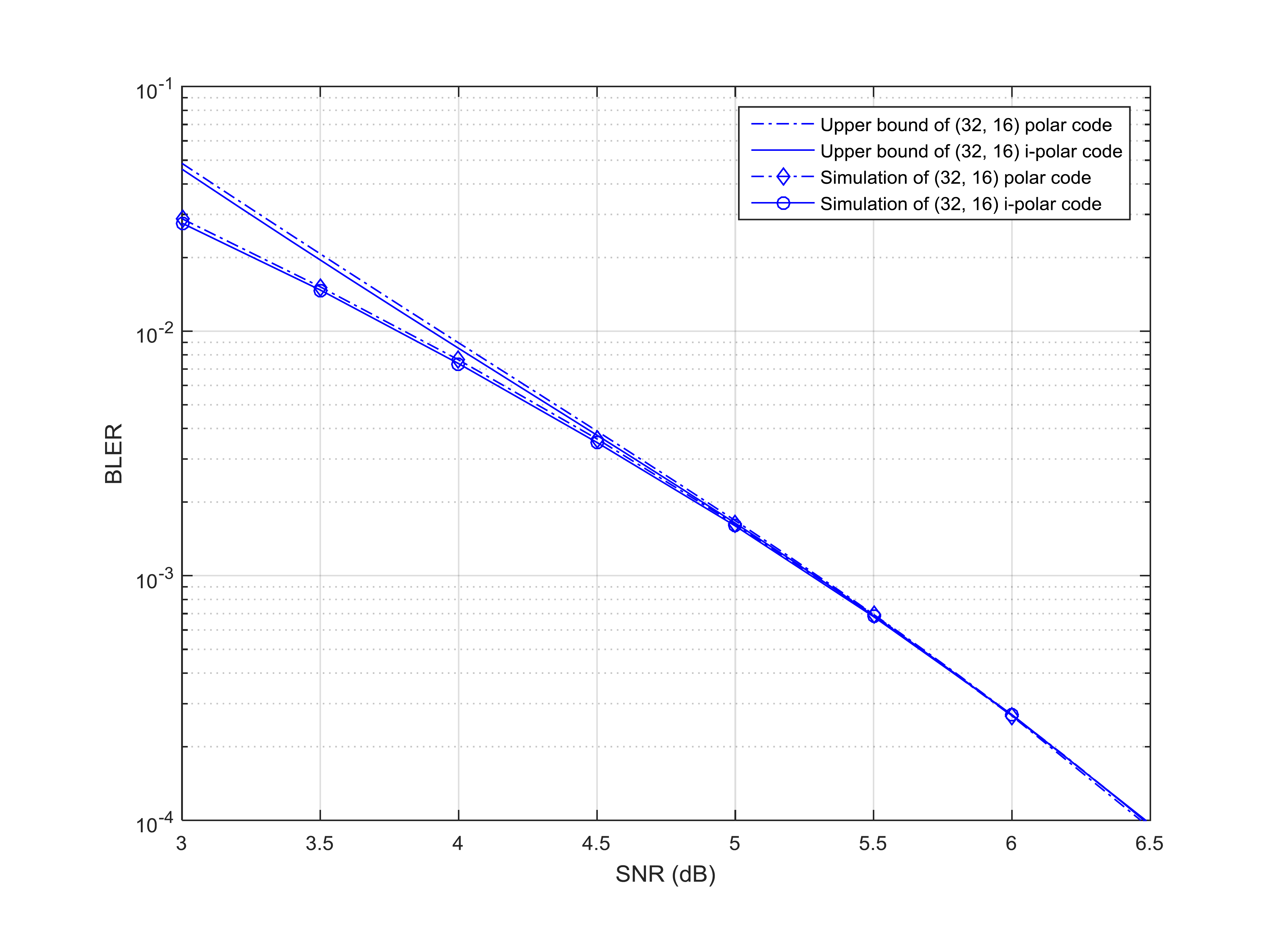}
\caption{BLER upper bounds and simulation results of the $(32, 16)$ polar code and $(32,16)$ i-polar code. 
} 
\label{fig:polar32_16}
\end{figure}

Figure \ref{fig:polar32_16} shows the BLER upper bounds for the $(32, 16)$ polar code and $(32,16)$ i-polar code based on the WEFs given in Table \ref{tab:wef32}. Also BLER simulations are conducted based on the SCL decoder with $L = 8$. In this case, we have verified that the performance of the SCL decoder with $L=8$ is very close to the ML performance. The BLER upper bounds show that the i-polar code is slightly better than the polar code at low SNRs. Simulation results also show such a slight difference.

\section{WEF of Concatenated Coding Schemes}
\label{sec:concat}

\begin{figure} 
\centering
\includegraphics[width=1.0\columnwidth]{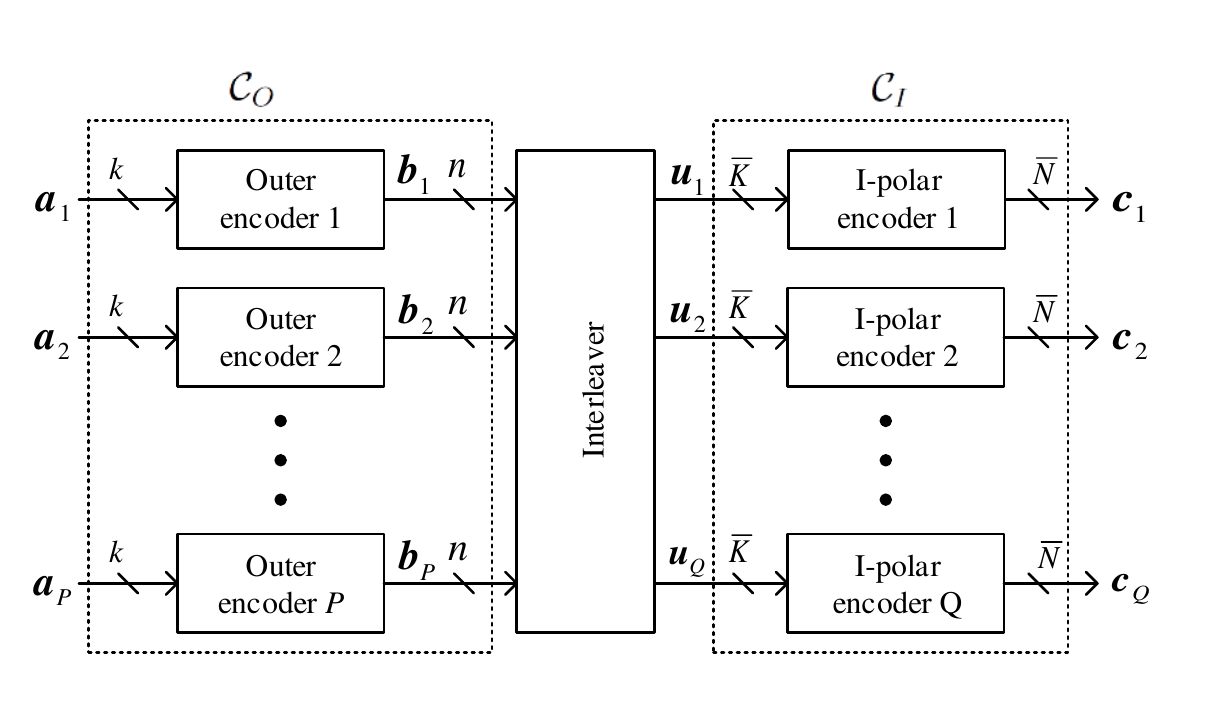}
\caption{Concatenated coding scheme with the i-polar code as the inner component code.} 
\label{fig:conc}
\end{figure}

Polar codes are weak at short to moderate block lengths. Therefore, concatenated coding schemes are often considered in the design of polar codes. A famous coding scheme is to concatenate a CRC code as the outer code with a polar code as the inner code \cite{Tal2015}. The Reed-Solomon codes, BCH codes, convolutional codes, and LDPC codes are also considered as the outer code with a polar code as the inner code \cite{Mahdavifar2014, Wang2016, Guo2014}. So far, for concatenated coding schemes, most research works focus on the asymptotic analysis, i.e., the performance analysis as $N$ approaches infinity. The performance levels for codes of finite block lengths rely on simulations which are time-consuming. We will develop the WEF analysis for concatenated codes which can then be used to evaluate the BLER performance using the bound given in (\ref{eq:PBLERU}).

We consider a concatenated coding scheme as shown in Figure \ref{fig:conc}. The encoder consists of $P$ parallel outer encoders of the same code, denoted as $\bar{\cal C}_O$. The code $\bar{\cal C}_O$ is called the outer component code. For the $p$th outer encoder, the input message word is denoted as a $k$-bit vector $\nbf{a}_p$ and the output codeword is denoted as an $n$-bit vector $\nbf{b}_p$. The $P$ output codewords from the outer encoders form an $nP$-bit super-codeword, denoted as $\nbf{b} = [\nbf{b}_1, \ldots, \nbf{b}_P]$. The $nP$-bit super-codeword is then interleaved by an interleaver which outputs a vector $\nbf{u}$ represented as 
$
     \nbf{u} =  \nbf{b} \nbf{\Pi}
$,
where $\nbf{\Pi}$ is the permutation matrix formed by the interleaver. The vector $\nbf{u}$ is then partitioned into $Q$ blocks with each block of size $\bar{K}$, where $\bar{K}Q = nP$. The partition is given by $\nbf{u} = [\nbf{u}_1, \ldots, \nbf{u}_Q]$ with $\nbf{u}_q$ being a $\bar{K}$-bit vector for $q=1, \ldots, Q$. Then $Q$ parallel i-polar encoders of the $(\bar{N},\bar{K})$ inner component code $\bar{\cal C}_I$ are employed to encode the $Q$ input vectors $\nbf{u}_1, \ldots, \nbf{u}_Q$ and finally the $Q$ output codewords $\nbf{c}_1, \ldots, \nbf{c}_Q$ are obtained. The final super-codeword is $\nbf{c} = [\nbf{c}_1, \ldots, \nbf{c}_Q]$.

As indicated in Figure \ref{fig:conc} by dashed blocks, we may represent the super-code corresponding to the parallel outer encoders as ${\cal C}_O$ and the super-code corresponding to the parallel inner encoders as ${\cal C}_I$. The entire system becomes a simple concatenation of  ${\cal C}_O$ and ${\cal C}_I$ with an interleaver in between. Given the WEF $A^{\bar{\cal C}_O}(Y)$ of the outer component code, and the IOWEF $A^{\bar{\cal C}_I}(X,Y)$ of the inner component code, we can calculate the WEF of the outer super-code ${\cal C}_O$ and the IOWEF of the inner super-code ${\cal C}_I$ by the following theorem.

\begin{thm}
Let ${\cal C}_O$ be the parallel concatenation of $P$ identical outer component codes $\bar{\cal C}_O$ and ${\cal C}_I$ be the parallel concatenation of $Q$ identical inner component codes $\bar{\cal C}_I$.
Given $A^{\bar{\cal C}_O}(Y)$ and $A^{\bar{\cal C}_I}(X,Y)$, the WEF of  ${\cal C}_O$ and IOWEF of ${\cal C}_I$ are
\bea
     A^{{\cal C}_O}(Y) &=& \left[A^{\bar{\cal C}_O}(Y)\right]^P,       \label{eq:ACO}\\
     A^{{\cal C}_I}(X,Y) &=& \left[A^{\bar{\cal C}_I}(X,Y)\right]^Q.  \label{eq:ACI}
\eea
\end{thm}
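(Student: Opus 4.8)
The plan is to reduce both identities to the elementary fact that Hamming weight is additive under concatenation of vectors, so that the (input--output) weight enumerator of a parallel concatenation (i.e., a direct sum) of codes factors into the product of the component enumerators; no property of i-polar codes beyond the shape of the encoder is needed.

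First I would treat the outer super-code ${\cal C}_O$. By construction each codeword of ${\cal C}_O$ has the form $\nbf{b} = [\nbf{b}_1, \ldots, \nbf{b}_P]$ with the blocks $\nbf{b}_1, \ldots, \nbf{b}_P$ ranging \emph{independently} over $\bar{\cal C}_O$ (as each input $\nbf{a}_p$ runs over all $k$-bit vectors), so the codeword set of ${\cal C}_O$ is exactly the $P$-fold Cartesian product of the codeword set of $\bar{\cal C}_O$. Since $w_H(\nbf{b}) = \sum_{p=1}^{P} w_H(\nbf{b}_p)$, the defining sum of the WEF splits into a product of sums:
\[ A^{{\cal C}_O}(Y) = \sum_{\nbf{b} \in {\cal C}_O} Y^{w_H(\nbf{b})} = \sum_{\nbf{b}_1 \in \bar{\cal C}_O} \cdots \sum_{\nbf{b}_P \in \bar{\cal C}_O} \prod_{p=1}^{P} Y^{w_H(\nbf{b}_p)} = \prod_{p=1}^{P} \left( \sum_{\nbf{b}_p \in \bar{\cal C}_O} Y^{w_H(\nbf{b}_p)} \right) = \left[ A^{\bar{\cal C}_O}(Y) \right]^P, \]
which is (\ref{eq:ACO}).

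For the inner super-code ${\cal C}_I$ I would run the same argument while also tracking input weights, the extra care being that the IOWEF is an encoder-dependent quantity. The encoder of ${\cal C}_I$ is blockwise: it maps $\nbf{u} = [\nbf{u}_1, \ldots, \nbf{u}_Q]$ to $\nbf{c} = [\nbf{c}_1, \ldots, \nbf{c}_Q]$, where $\nbf{c}_q$ is the output of the $q$-th i-polar encoder on input $\nbf{u}_q$. Hence \emph{both} $w_H(\nbf{u}) = \sum_{q} w_H(\nbf{u}_q)$ and $w_H(\nbf{c}) = \sum_{q} w_H(\nbf{c}_q)$ are additive over the $Q$ blocks, and the same factorization of the bivariate generating function gives, for any fixed choice of all the inner interleavers,
\[ A^{{\cal C}_I}(X,Y) = \sum_{\nbf{u}} X^{w_H(\nbf{u})} Y^{w_H(\nbf{c})} = \prod_{q=1}^{Q} \left( \sum_{\nbf{u}_q} X^{w_H(\nbf{u}_q)} Y^{w_H(\nbf{c}_q)} \right) = \prod_{q=1}^{Q} A^{\bar{\cal C}_I}(X,Y) = \left[ A^{\bar{\cal C}_I}(X,Y) \right]^Q, \]
which is (\ref{eq:ACI}). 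If instead one works with ensemble-average enumerators and lets the $Q$ inner encoders draw independent uniform interleavers, the fixed-realization identity $A^{{\cal C}_I}(X,Y) = \prod_{q} A^{\bar{\cal C}_I^{(q)}}(X,Y)$ still holds realization by realization, and taking expectations, using independence of the draws, again yields $\left[ A^{\bar{\cal C}_I}(X,Y) \right]^Q$.

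There is no genuine obstacle here; the one point that deserves attention is precisely that the input weight, and not just the output weight, is additive across blocks --- this is what lets the IOWEF (rather than only the WEF) factor, and it relies on the encoder of ${\cal C}_I$ acting on each $\nbf{u}_q$ separately. I would also remark that the interleaver $\nbf{\Pi}$ between ${\cal C}_O$ and ${\cal C}_I$ in Figure \ref{fig:conc} is irrelevant to this theorem, since ${\cal C}_O$ and ${\cal C}_I$ are the codes on the two sides of that interleaver and the statement concerns each of them in isolation.
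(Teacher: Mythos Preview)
Your proposal is correct and follows essentially the same route as the paper: additivity of Hamming weight over concatenated blocks turns the defining sum into a product of identical factors, yielding the $P$th and $Q$th powers. The paper writes out only the WEF case and dismisses the IOWEF case with ``similar proof,'' whereas you spell it out and add the (harmless) remarks on ensemble averaging and the irrelevance of $\nbf{\Pi}$; these are refinements, not a different argument.
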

\begin{proof}
The outer super-code is a parallel concatenation of $P$ identical codes $\bar{\cal C}_O$, i.e., ${\cal C}_O = \{[\nbf{b}_1, \nbf{b}_2,\cdots,\nbf{b}_P]: \nbf{b}_1 \in \bar{\cal C}_O, \nbf{b}_2 \in \bar{\cal C}_O,\ldots, \nbf{b}_P \in \bar{\cal C}_O\}$. The WEF of the outer super-code can be represented as
\bean
A^{{\cal C}_O}(Y) &=& \sum_{\snbf{b}_1 \in \bar{\cal C}_O} \sum_{\snbf{b}_2 \in \bar{\cal C}_O} \cdots \sum_{\snbf{b}_P \in \bar{\cal C}_O}
Y^{w_H([\snbf{b}_1,\snbf{b}_2,\cdots,\snbf{b}_P])}\\
&=& \sum_{\snbf{b}_1 \in \bar{\cal C}_O} \sum_{\snbf{b}_2 \in \bar{\cal C}_O} \cdots \sum_{\snbf{b}_P \in \bar{\cal C}_O}
Y^{w_H(\snbf{b}_1)} Y^{w_H(\snbf{b}_2)} \cdots Y^{w_H(\snbf{b}_P)} \\
&=&  \sum_{\snbf{b}_1 \in \bar{\cal C}_O} Y^{w_H(\snbf{b}_1)}  \sum_{\snbf{b}_2 \in \bar{\cal C}_O} Y^{w_H(\snbf{b}_2)} \cdots  \sum_{\snbf{b}_P \in \bar{\cal C}_O} Y^{w_H(\snbf{b}_P)} \\
&=&  \left[A^{\bar{\cal C}_O}(Y)\right]^P.  
\eean
Similar proof can be used to prove (\ref{eq:ACI}).
\end{proof}

 Similar to that shown in  \cite{Benedetto1998}, by the assumption of uniform interleaver, the WEF of the serial concatenated code, denoted as ${\cal C}$, can be represented as
\bea
A^{\cal C}(Y) &=& \sum_{w, d} \frac{A^{{\cal C}_O}_w A^{{\cal C}_I}_{w, d}}{\left(nP \atop w\right)} Y^d, 
\label{eq:ACY}
\eea
where $A^{{\cal C}_O}_w$ and $A^{{\cal C}_I}_{w,d}$ are the coefficients of (\ref{eq:ACO}) and (\ref{eq:ACI}), respectively.

The CRC-$m$ code is often considered as the outer code which adds $m$ redundant (parity) bits at the end of the message vector and can be treated as a $(k+m, k)$ linear code. However, the WEFs of general CRC codes are not available in the literature.
In order to evaluate the WEF of the concatenated code, we employ the primitive $(2^m-1, 2^m-m-1)$ BCH code (which is actually a Hamming code) as the outer component code in the analysis. The encoder structure of the $(2^m-1, 2^m-m-1)$ BCH code is similar to that of the CRC-$m$ code, except that the generator polynomial of the $(2^m-1, 2^m-m-1)$ BCH code must be a primitive polynomial of degree $m$ and the code length must be $2^m-1$.  Some WEFs of BCH codes can be found in \cite{bchweb}.   

An alternative way is to use  systematic regular repeat-accumulate (RRA) codes or irregular repeat-accumulate (IRA)  codes \cite{Jin2000} as the outer component code. For simplicity, we employ systematic RRA codes in our analysis. The RRA encoder repeats every message bit $d_v$ times and then feeds the entire repeated message bits into an interleaver of size $K d_v$. An accumulate code is then employed to encode the interleaved $K d_v$ bits. The coded bits are then punctured regularly, i.e., only the last bit is retained for every $d_c$ output bits, and a total of $m$ parity bits are retained with $m d_c  = K d_v$. The details of computing the WEFs of systematic RRA codes and IRA codes can be found in \cite{Zhang2007}. 

An issue of the concatenated coding scheme is how to design the decoder that  achieves the ML performance. A possible method is to use the belief propagation (BP) decoder for the inner code which outputs soft information for decoding of the outer code. However, the performance of the BP decoder for a polar code is much inferier to that of the ML decoder. Another method for $Q=1$ is to use the SCL decoder for the inner code and an error detector for the outer code as that proposed for decoding of CRC-aided polar codes \cite{Tal2015}. In this method, the SCL decoder outputs $L$ message vectors in a descending order of their reliabilities. The first message vector that passes the error detection of the outer code is selected as the decoded message vector. If no message vector passes the error detection, the message vector of the largest reliability is selected as the decoded message vector. However, for $Q > 1$, we adopt $Q$ SCL decoders working in parallel and each decoder outputs $L$ candidates of message vectors. Therefore,  $L^Q$ combined message vectors and reliabilities are obtained by considering all possible combinations of message vectors and reliabilities from all decoders. The $L^Q$ combined message vectors are then fed to the outer decoder in a descending order of their reliabilities. The first combined message vector that passes the error detection of the outer code is selected as the output. In the simulation, we  adopt the SCL decoder for the inner code and an error detector for the outer code. 

The advantage for the proposed concatenated code with $Q>1$ is that $Q$ SCL decoders can operate in parallel. This parallel decoding structure can reduce the decoding latency as compared to the SCL decoder for a single polar code of the same length and code rate. For example, for  $Q=2$, if the total block length is $N$, the block length of each i-polar code is $N/2$. These two i-polar codes can be decoded simultaneously which means that two SCL decoders of length $N/2$ can operate in parallel. 
Because the codeword length becomes $N/2$ for each decoder, the decoding latency can be reduced as compared to the SCL decoder for a single polar code of length $N$.
\begin{figure}
\centering
\includegraphics[width=1.0\columnwidth]{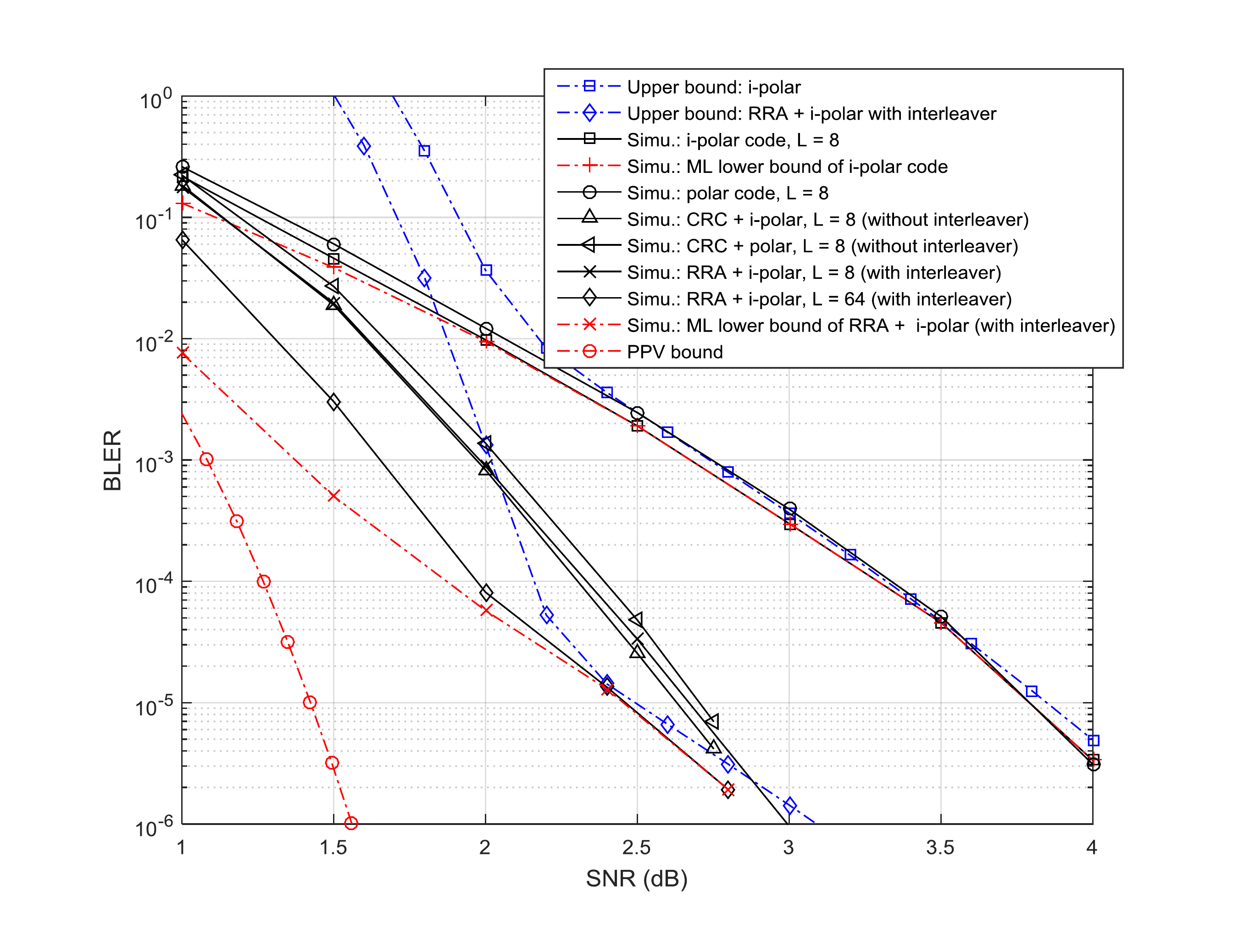}
\caption{Upper bounds and simulation results of (1024, 512) codes.} 
\label{fig:performance_1024_512}
\end{figure}

\section{Analytical and Simulation Results}
\label{sec:simu}
\subsection{Analytical and Simulation Results of I-Polar Codes and Polar Codes}
In order to compare different coding schemes, we consider various rate-1/2 codes of length $1024$ as follows.
\begin{enumerate}
\item  $(1024, 512)$ polar code,
\item  $(1024, 512)$ i-polar code,
\item  CRC-aided polar code which is the direct concatenation of a $(520, 512)$ CRC-8 code and a $(1024, 520)$ polar code without interleaver in between,
\item  CRC-aided i-polar code which is the direct concatenation of a $(520, 512)$ CRC-8 code and a $(1024, 520)$ i-polar code without interleaver in between,
\item RRA-aided i-polar code which is the concatenation of a $(520, 512)$ RRA code with $d_v=3$ and a $(1024, 520)$ i-polar code with an interleaver in between as that shown in Figure \ref{fig:conc} with $P=Q=1$.
\end{enumerate}
The CRC-8 code is generated by the polynomial 
\beq
g_{8A}(D) = D^8 + D^7 + D^6 + D^5 + D^4 + D^3 + 1. 
\label{eq:g8A}
\eeq 
Note that the interleavers, including the one between the inner and outer codes and those in the i-polar code, are only generated once and are fixed for subsequent simulation runs. The SCL decoder with $L=8$ is employed to decode both i-polar and polar codes. A simple error detector is employed to decode the outer code. Figure \ref{fig:performance_1024_512} shows the simulation results and BLER upper bounds based on the simple bound \cite{Divsalar1999} as given in (\ref{eq:PBLERU}). Since the IOWEF of the polar code and the WEF of the CRC code are not available, only upper bounds for the i-polar code and RRA-aided i-polar code are given. 
Simulation results show that the i-polar code is about 0.1 dB better than the polar code at low SNRs. The BLER upper bound of the i-polar code can well predict the performance level at high SNRs. We also give the simulation result of the ML lower bound of the i-polar code as that proposed in \cite{Tal2015}. The ML lower bound is obtained as follows. During the  simulations, when a decoding error occurs, the decoded codeword is checked whether it is more likely than the transmitted codeword. If this event happens, the ML decoder would surely make an error as well. The ML lower bound is the frequency of such an event and is thus a lower bound on the BLER of the ML decoder. Simulation results show that the analytical BLER upper bound and the simulated BLER are very close to the ML performance at high SNRs. 

The performance gap between the CRC-aided i-polar code and CRC-aided polar code is about 0.1 dB at BLER of $10^{-5}$ as shown in Figure \ref{fig:performance_1024_512}. The results also show that the RRA-aided i-polar code is about 0.05 dB worse than the CRC-aided i-polar code at BLER of $10^{-5}$. The RRA-aided i-polar code with SCL decoder of list size $L=8$ is much inferior to that predicted by the BLER upper bound at middle SNRs. This means that $L=8$ is not enough to achieve the ML performance of the RRA-aided i-polar code. To verify this point, we increase the list size to $L=64$ and perform both BLER and ML lower bound simulations. The results show that with $L=64$ the simulated BLER achieves that of the ML lower bound. Also the analytical BLER upper bound can well predict the ML performance at high SNRs. In order to know how far the constructed codes are away to the theoretical limit, we employed the Polyanskyi-Poor-Verd\`{u} (PPV) bound as given in \cite{Polyanskiy2010, Erseghe2016, ppvweb} as a reference. The PPV bound is a lower bound on the size of a code that can be guaranteed to exist with given arbitrary block length $N$, BLER, and SNR of the BI-AWGN channel. Therefore, given block length $N$, BLER, and code dimension $K$, we may find the SNR for which the lower bound on the size of a code equals $2^K$. The BLER curve obtained in this way serves as the BLER lower bound for an $(N, K)$ code. For $N=1024$ and $K=512$, the results shown in Figure \ref{fig:performance_1024_512} indicate that the RRA-aided i-polar code under the SCL decoder with $L=64$ is about 1.0 dB away from the PPV bound at BLER of $10^{-5}$.

\begin{figure}
\centering
\includegraphics[width=1.0\columnwidth]{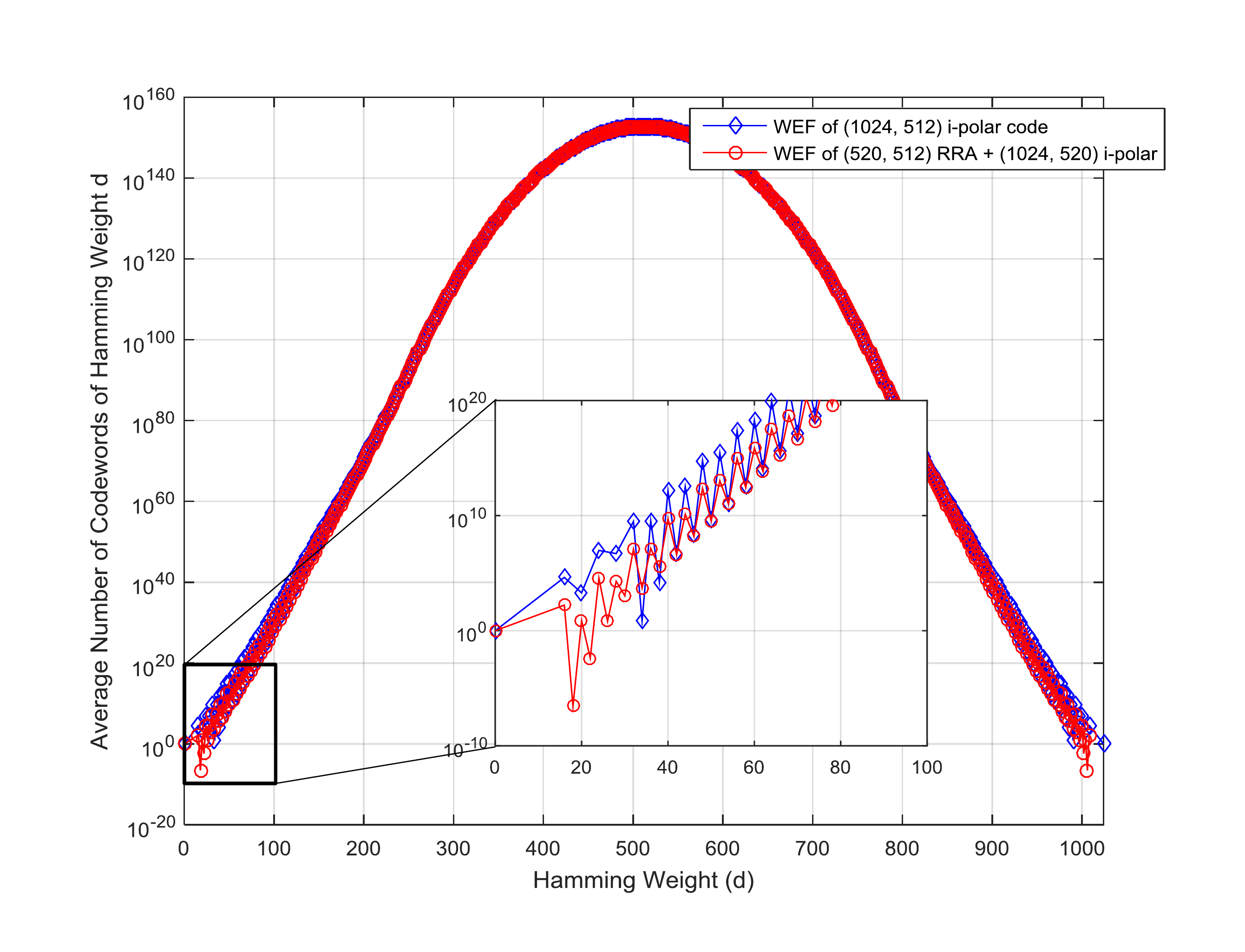}
\caption{WEFs of (1024, 512) codes.} 
\label{fig:WEF_1024_512}
\end{figure}

To understand how the RRA-aided i-polar code performs much better than the i-polar code, we plot the average WEFs of both codes as given in Figure \ref{fig:WEF_1024_512}. 
The average WEF of the (1024, 512) i-polar code is obtained by the recursive equation (\ref{eq:wefr}). The average WEF of the concatenated code is obtained as follows. The IOWEF of the (1024,520) i-polar code is computed by the recursive equation (\ref{eq:iowefr}) and the WEF of the RRA code can be computed as given in \cite{Zhang2007}. Finally, the average WEF of the concatenated code is calculated by (\ref{eq:ACY}).
The results show that both codes have minimum Hamming weight of 16. However, the numbers of low weight codewords of the RRA-aided i-polar code are dramatically reduced as compared to those of the i-polar code. For example, the numbers of the minimum weight codeword of the RRA-aided i-polar code and i-polar code are  $166.84$ and $42403.31$, respectively. This means that, at high SNRs, the BLER of the RRA-aided i-polar code is about  $166.84/42403.31 \approx 3.9\times10^{-3}$ times the BLER of the i-polar code. The upper bounds and simulation results for both codes faithfully reflect the difference as shown in Figure \ref{fig:performance_1024_512}.

\begin{figure}
\centering
\includegraphics[width=1.0\columnwidth]{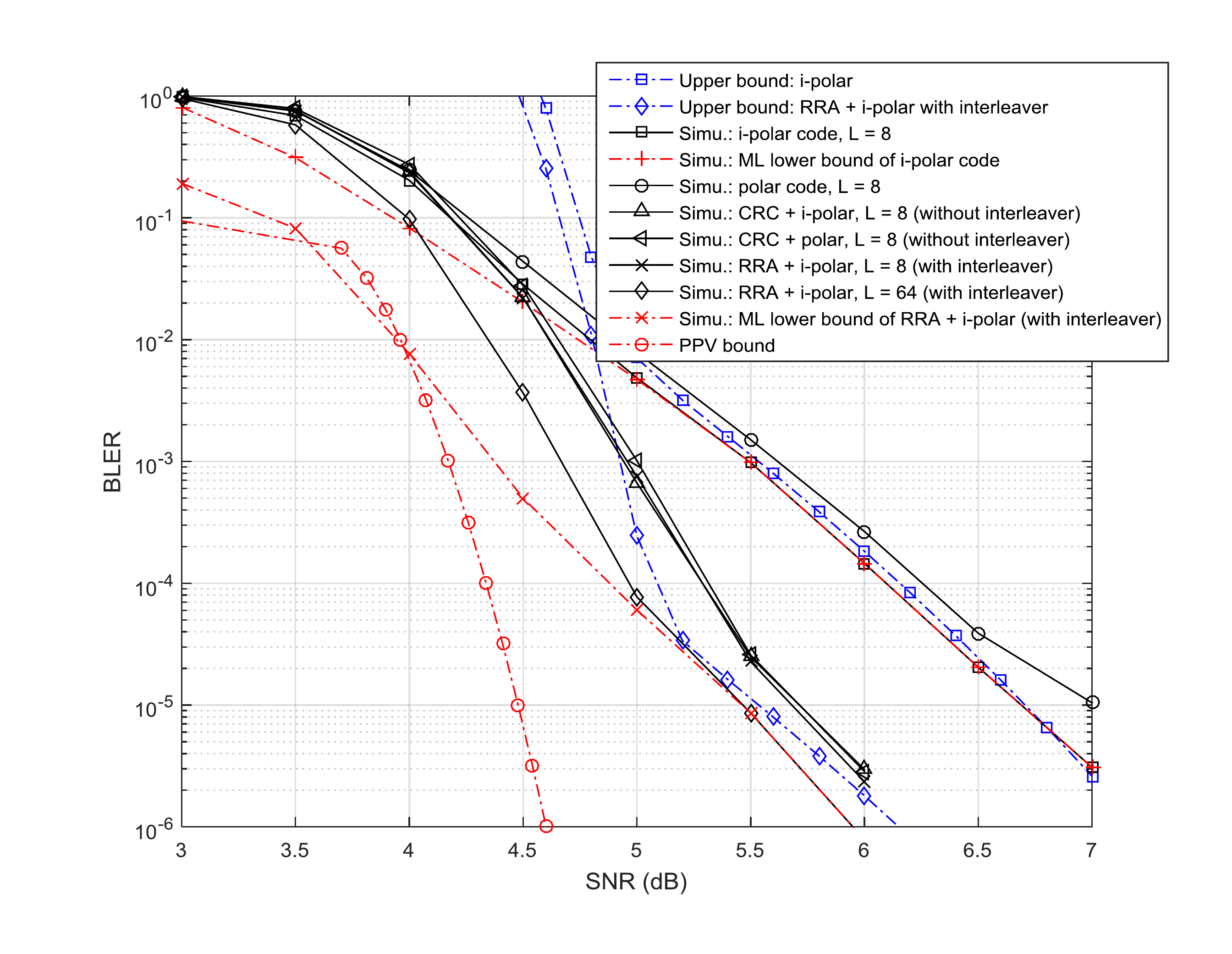}
\caption{Upper bounds and simulation results of (1024, 768) codes.} 
\label{fig:performance_1024_768}
\end{figure}

We also conduct similar simulations by increasing the code rate to 3/4. The CRC-8 code is generated by the polynomial $g_{8B}(D) = D^8 +D^7 +D^6 +D^4 +D^2+1$. The BLER simulation results and upper bounds are given in Figure \ref{fig:performance_1024_768}. The performance gap between the i-polar code and polar code is enlarged to 0.3 dB at BLER of $10^{-5}$. The upper bounds well predict the performance levels at high SNRs.
The results shown in Figure \ref{fig:performance_1024_768} also indicate that the RRA-aided i-polar code under the SCL decoder with $L=64$ is about 1.0 dB away from the PPV bound at BLER of $10^{-5}$.

\begin{figure}
\centering
\includegraphics[width=1.0\columnwidth]{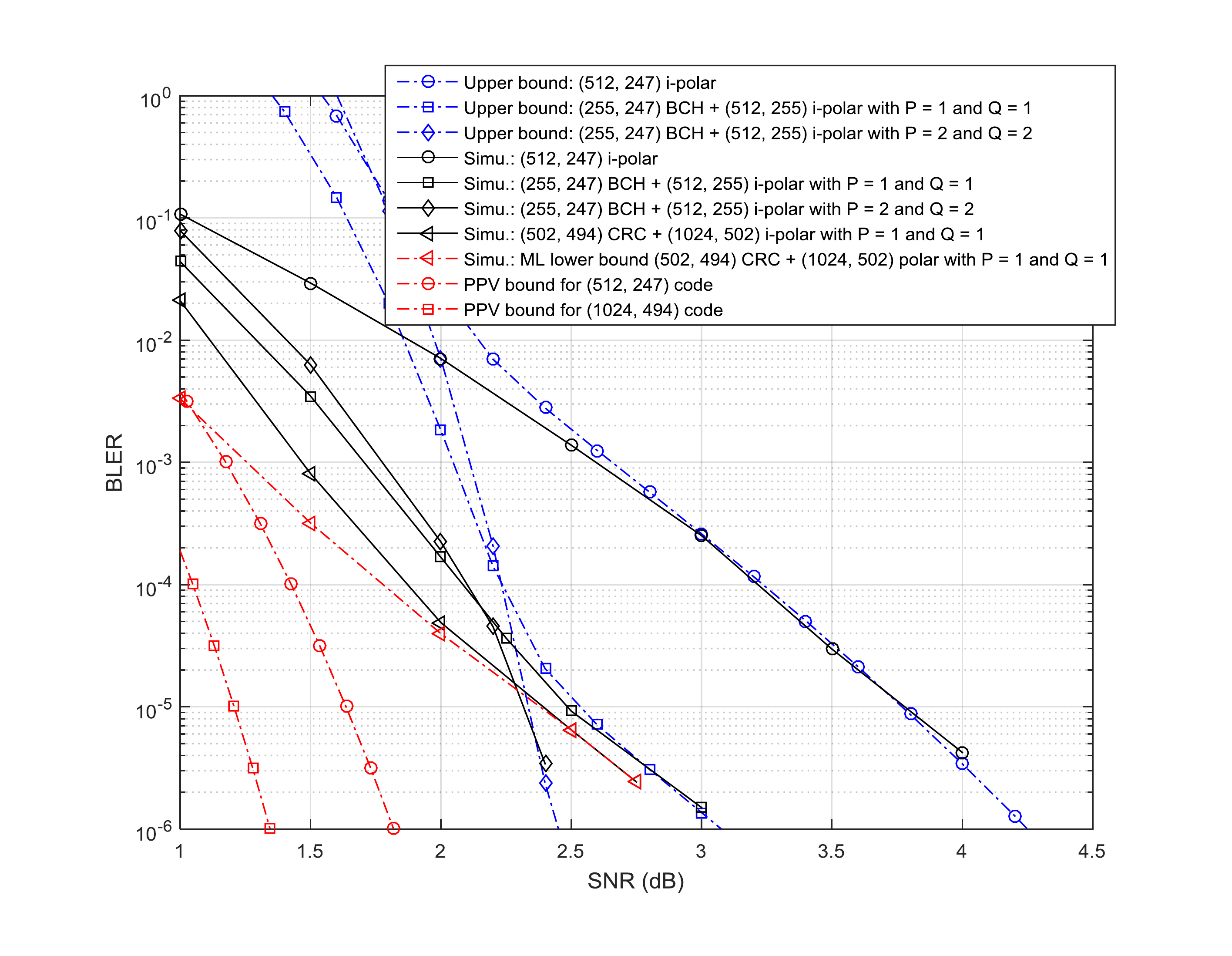}
\caption{Performance of BCH-aided i-polar codes under the SCL decoder with $L=64$.} 
\label{fig:concPerformance}
\end{figure} 

\subsection{BCH-Aided I-Polar Code}
We consider the primitive $(255, 247)$ BCH code as the outer component code and the $(512, 255)$ i-polar code as the inner component code. The generator polynomial of the BCH code is $g(D) = D^8 + D^4+D^3+D^2+1$. The WEF of the $(255, 247)$ BCH code can be found in \cite{bchweb}.
Two settings are considered in the simulation. The first setting is $P=Q=1$ and the second setting is $P=Q=2$. In order to achieve the ML performance, the SCL decoder with $L=64$ is employed for all codes considered.    
The analytical and simulation results are given in Figure \ref{fig:concPerformance}. The results show that the BLER upper bounds can well predict the performance levels at high SNRs. With $P=Q=1$, the BCH-aided i-polar code is about 1.3 dB better than the i-polar code at the BLER of $10^{-5}$. The slope of the BLER curve for $P=Q=2$ is much steeper than that of $P=Q=1$ at high SNRs. This phenomenon is well predicted by the BLER upper bound.

For $P=Q=2$, the concatenated code is a $(1024, 494)$ code. To compare to the code of the same length and code rate with $P=Q=1$, we conduct a simulation of the CRC-aided i-polar code which is the concatenation of the $(502, 494)$ CRC-8 code, generated by $g_{8A}(D)$ as given in (\ref{eq:g8A}), and the $(1024, 502)$ i-polar code with an interleaver in between. The SCL decoder with $L=64$ is employed to decode the CRC-aided i-polar code with $P=Q=1$. The BLER lower bound is also conducted in this simulation. Simulation results shown in Figure \ref{fig:concPerformance} indicate that the proposed $P=Q=2$ coding scheme is worse than the $P=Q=1$ case at low SNRs.  However, the BLER  for the $P=Q=2$ case becomes better than that of the $P=Q=1$ case at high SNRs. The simulation results also show that the BLER  for the $P=Q=2$ case is   better than the ML lower bound for the $P=Q=1$ case at high SNRs. This means that, at high SNRs, the BLER of the $P=Q=2$ case decoded by the SCL decoder with $L=64$ is truly better than that of the $P=Q=1$ case even under ML decoding.

The slope of the BLER curve for the $P=Q=2$ case is steeper than that of the $P=Q=1$ case under the same length and code rate. This means that the proposed coding scheme with $P=Q=2$ can provide more reliable communications at high SNRs. In addition, the proposed coding scheme with $P=Q=2$ allows two SCL decoders to operate in parallel which can reduced the decoding latency. Therefore, in terms of the realiability and decoding latency, the proposed coding scheme is suitable for  ultra-reliable low-latency communications (URLLC) \cite{ITU-R2017}.      

\begin{figure}
\centering
\includegraphics[width=1.0\columnwidth]{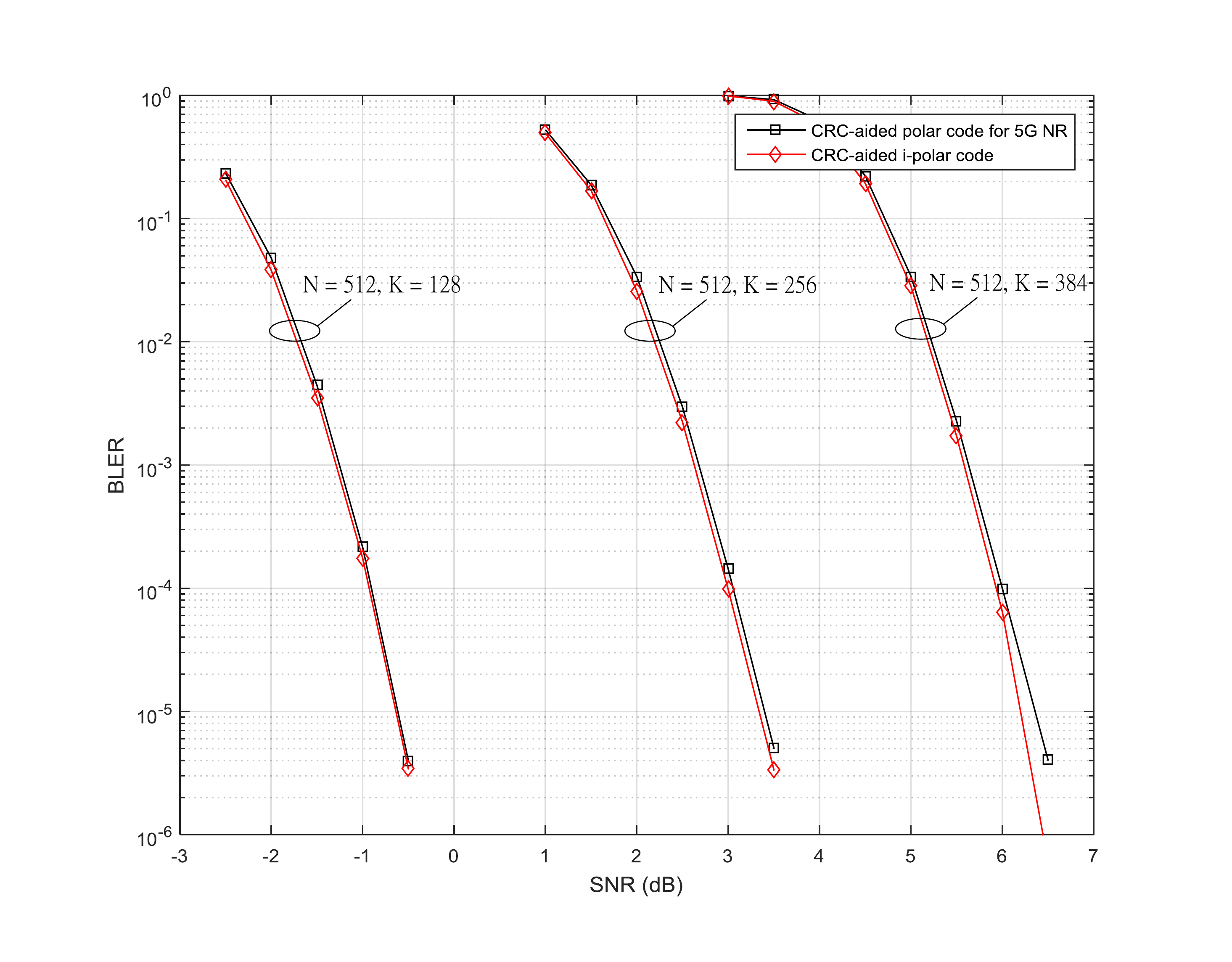}
\caption{Simulation results for 5G polar codes and i-polar codes under the SCL decoder with $L=8$.} 
\label{fig:polar_5G}
\end{figure}

\subsection{I-polar Code and Polar Code for Fifth Generation New Radio (5G NR)}
Polar codes have been employed as the channel coding technique for the control channel in the specification of fifth generation new radio (5G NR) \cite{TS38.212}. The maximum length of polar codes defined in the specification \cite{TS38.212} is $N=1024$. Let ${\cal A}(k)$ be the unfrozen bit set with $k$ elements, i.e., $|{\cal A}(k)|=k$. The 5G polar codes have the property that ${\cal A}(k) \subset {\cal A}(l)$ if $k < l$. Therefore, a single sequence, termed {\em polar sequence}, is defined in Table 5.3.1.2-1 of \cite{TS38.212}, which can be used to derive ${\cal A}(k)$ for any $0 < k \leq N$ and $N\in\{128, 256, 512, 1024\}$. In addition, the CRC-24 with generator polynomial $g_{\mbox{\scriptsize CRC24C}}(D)$ defined in Section 5.1 of \cite{TS38.212} is employed as the outer code. We select three codes of $N=512$ and $K\in\{128,256,384\}$ which correspond to code rates of $1/4$, $1/2$, and $3/4$. In order to test the performance of i-polar codes, we replace the 5G polar code by the i-polar code with the same unfrozen bit set derived from  \cite{TS38.212}. The BLER simulation results under the SCL decoder with $L=8$ are shown in Figure \ref{fig:polar_5G}. The results show that the i-polar code performs slightly better than the 5G polar code for all cases considered. The maximum improvement is about 0.15dB for the rate-3/4 code at BLER of $10^{-5}$. 


\section{Conclusion}
\label{sec:conc}
\subsection{Suggestion for Further Research}
\begin{enumerate} 
\item {\bf Selection of the unfrozen bit set}: Conventional algorithms for the selection of the unfrozen bit set ${\cal A}$ are based on the Gaussian approximation (GA), which is nearly optimal for the SC decoder. However, for a more sophisticated decoder, such as the SCL decoder, the conventional selection algorithms are not optimal. The WEF of i-polar codes can well predict the BLER performance under the ML decoder, which can be used for the selection of the unfrozen bit set so that the performance can be improved under the SCL decoder. A possible method is to use the greedy algorithm as described below. Let ${\cal A}(k)$ be a unfrozen bit set with $|{\cal A}(k)| = k$ for $k=1, \ldots, N$. Then we can use the constraint ${\cal A}(k) \subset {\cal A}(l)$ for $k < l$ in the bit channel selection algorithm. This constraint has also been used in the code design of 5G polar codes \cite{TS38.212}. Given ${\cal A}(k)$, we may calculate the WEFs of all test cases of ${\cal A}^{(j)}(k+1) = {\cal A}(k) \cup j$ for some $j \in {\cal A}^c(k)$. The BLER upper bound at a certain SNR for each test case can be calculated based on the WEF. The set with the smallest BLER upper bound is selected as ${\cal A}(k+1)$. The above procedure is repeated until $k=N$.

\item {\bf Design of concatenated codes}: For $Q>1$, there are $L^Q$ combined message vectors and reliabilities if $Q$ parallel SCL decoders are employed to decode the inner component codes. This number may become prohibitively complex if $Q$ or $L$ is large. To reduce the number of candidates of combined message vectors, a local error detection code for each inner component code can be employed. In this way, each decoder only outputs message vectors that pass the local error detection, which could significantly reduce the number of candidates from each decoder.

\item {\bf Optimization of RRA and IRA outer codes}: Simulation results show that the RRA-aided i-polar code performs slightly worse than the CRC-aided i-polar code. Therefore, we believe that there is room to improve the RRA outer code. One possible way is to adopt IRA codes. The reason is that the IRA code has more degrees of freedom to select the degree distributions of information nodes and check nodes. Then the WEF of the concatenated code can be used to design the degree distributions \cite{Jin2000} of information nodes and check nodes of the IRA code. 

\end{enumerate}
\subsection{Concluding Remarks}

This paper proposes a new class of polar codes, termed interleaved polar (i-polar) codes. By the assumption of uniform interleavers, we derive the average WEF over ensemble of codes. The average WEF is then used to calculate the BLER upper bound. Simulation results show that i-polar codes outperform  polar codes. 
Also, the BLER upper bounds can well reflect the BLER performance levels at high SNRs.
We also propose a concatenated coding scheme which employs $P$ identical high-rate codes as the outer code and $Q$ identical i-polar codes as the inner code.  The concatenated coding scheme shows a steeper BLER slope at high SNRs for $P=Q=2$ and its performance is better than that of the CRC-aided i-polar code with $P=Q=1$ of the same length and code rate at high SNRs. However, the proposed coding scheme allows multiple SCL decoders to operate in parallel, which can significantly reduce the decoding latency. Therefore, the proposed coding scheme is suitable for  ultra-reliable low-latency communications (URLLC).

\bibliographystyle{IEEEtran}
\bibliography{IEEEabrv,ref}

\end{document}